\documentclass[11pt]{article}

\usepackage[T1]{fontenc}
\usepackage{fullpage}
\usepackage{amsmath, amssymb, amsthm, mathtools}
\usepackage{thmtools}
\usepackage[normalem]{ulem}
\usepackage{comment}


\usepackage[colorlinks=true,linkcolor=blue,citecolor=blue,urlcolor=blue]{hyperref}
\usepackage{cleveref}
\usepackage{bm}
\usepackage{tikz}
\usetikzlibrary{angles,arrows.meta,quotes}
\usepackage{tcolorbox}
\usepackage{enumitem}  
\usepackage[margin=1in]{geometry}  

\newtheorem{theorem}{Theorem}[section]
\newtheorem{lemma}[theorem]{Lemma}

\theoremstyle{definition}

\theoremstyle{remark}

\title{The Cube-Root Phenomenon in Online Carpooling}
\author{
Nikhil Bansal\thanks{University of Michigan. \texttt{bansaln@umich.edu}.
Supported in part by NSF awards CCF-2327011 and CCF-2504995.} 
\and
Milind Prabhu\thanks{University of Michigan. \texttt{milindpr@umich.edu}. Supported by NSF award CCF-2327011.} 
\and
Sahil Singla\thanks{School of Computer Science, Georgia Tech. \texttt{ssingla@gatech.edu}. Supported in part by NSF awards CCF-2327010 and CCF-2440113.}
\and
Siddharth M. Sundaram\thanks{School of Computer Science, Georgia Tech. \texttt{ssundaram38@gatech.edu}. Supported in part by NSF awards CCF-2327010 and CCF-2440113.}
}

\newcommand{\R}{\mathbb{R}}

\newcommand{\Gsamp}{G_{\mathrm{samp}}}
\newcommand{\Greedy}{Greedy}
\newcommand{\expanding}{expanding}
\newcommand{\contracting}{contracting}
\newcommand{\numexpanding}{p}

\newcommand{\topsum}[1]{S_{#1}}

\begin{document}
\maketitle

\begin{abstract}
We consider the online carpooling problem, where edges arrive online and must be
oriented immediately while keeping the discrepancy between the indegree and outdegree at each vertex small. 
We prove that the natural \Greedy{} algorithm incurs discrepancy \(O(\min\{T^{1/3},n\})\) after \(T\) arrivals.
This resolves a question of Ajtai et al.~\cite{AjtaiAspnesNaorRabaniSchulmanWaarts98}, who
showed that any deterministic algorithm must incur \(\Omega(\min\{T^{1/3},n\})\) discrepancy, and gave an algorithm with $O(\min\{T^{1/2},n\})$ discrepancy.

\smallskip

We also show a similar square-root to cube-root improvement in the stochastic setting, where $O(n)$ edges are sampled independently from an underlying $n$-vertex graph $G$. Formally, we show an $
O((\log n)^{1/3})$ bound for random arrivals from any $\Delta$-regular graph $G$. When $\Delta = \Omega((\log n)^3)$, we show the more refined bound of $
O((\log n/\log \Delta)^{1/3}+\log\log n)$ on the discrepancy. We show that the cube-root term in the previous bound is essential, while the $\log\log n$ term is already known to be necessary for random arrivals from complete graphs.
The previous upper bounds here were $O(\sqrt{\log n})$, which follow from the breakthrough works on online discrepancy \cite{KulkarniReisRothvoss24,adenali2026optimalonlinediscrepancyminimization}.

\smallskip

Our techniques for proving such cube-root-type bounds may be of
independent interest, as the standard quadratic-potential and
subgaussian analyses underlying the previous general bounds appear
inherently unable to go below square-root-type guarantees.

\end{abstract}

\clearpage

\newpage

\section{Introduction}
We consider the classical carpooling problem introduced by 
Fagin and Williams~\cite{FaginWilliams83}, and defined as follows: We are given a set $V$ of $n$ nodes, and undirected edges arrive online. Upon arrival of an edge $(i,j)$, it has to be oriented immediately as either $i\rightarrow j$ or $j\rightarrow i$. The goal is to minimize the maximum imbalance between the in-degree and out-degree of any node. Formally, if $\delta^{-}_t(i)$ and $\delta^{+}_t(i)$ denote the number of in-edges and out-edges incident to $i$ at time $t$,
define the signed discrepancy of node $i$ as $d_t(i):=\delta^{-}_t(i)-\delta^{+}_t(i)$. The goal is to minimize
\[
    \max_t\max_i |d_t(i)|.
\]
Equivalently, this can be viewed as an online discrepancy minimization problem in $\R^n$, where the incoming vectors $v_t$ have the form $\mathbf{1}(i)-\mathbf{1}(j)$ for $i,j\in [n]$, where $\mathbf{1}(i)$ denotes the $i$-th standard basis vector. The goal is to pick signs $\epsilon_t \in \{-1,1\}$ online to minimize the maximum discrepancy $\max_t \|d_t\|_\infty$, where $d_t = \sum_{s\leq t} \epsilon_{s} v_s$.
We also consider the weighted version of the problem, where each edge arrives with a weight and the goal is to minimize the maximum imbalance between incoming and outgoing weights.

Interestingly, despite its simple structure, the carpooling problem captures several more general problems and phenomena. For example, Ajtai et al. \cite{AjtaiAspnesNaorRabaniSchulmanWaarts98} showed that it is equivalent, up to a factor of $2$,  to {\em general carpooling} where arbitrary size hyperedges (representing carpools) arrive online, from which one person must be designated the driver; the ``fair share'' \cite{FaginWilliams83} for person $i$ is defined as $\sum_{t: i\in e_t} 1/|e_t|$, and we would like to ensure that the number of times any person drives stays close to this. 
In another result, \cite{BRS22} show that the approximability of minimizing the maximum flow time on related machines is, up to $O(1)$ factors, the same as offline edge-weighted carpooling. 

More generally, the problem is also closely related to several other areas such as the power of two-choices \cite{azar1994balanced, kenthapadi2005balancedallocationgraphs}, chip games \cite{anderson-etal, Bls91}, 
fairness in scheduling and social-networks \cite{FaginWilliams83, AjtaiAspnesNaorRabaniSchulmanWaarts98, FiatKarlinKoutsoupiasMathieuZach16}, online discrepancy \cite{Barany1979,BansalSpencer20,AltschulerT2025threshold} and rounding \cite{BRS22, liu_et_al26}, and has received a lot of attention.

However, despite much interest, there remains a long-standing and fundamental {\em cube-root} versus {\em square-root} gap for the problem --- both in the deterministic and the stochastic setting.

\medskip 
{\bf Deterministic Setting.} Here we consider online carpooling where the algorithm is deterministic and the arrivals are adversarial. 
In their seminal work, Ajtai et al.~\cite{AjtaiAspnesNaorRabaniSchulmanWaarts98} showed that the natural \Greedy{} algorithm, that orients each
arriving edge towards the endpoint with smaller discrepancy, achieves an \(O(T^{1/2})\) discrepancy after $T$ requests.\footnote{More precisely $O(\min\{T^{1/2},n\})$.}
On the other hand, they show that any deterministic online algorithm must also incur at least \(\Omega(T^{1/3})\)  discrepancy, provided that $T\leq n^3$.
Coppersmith et al.~\cite{CoppersmithNowickiPaleologoTresserWu11} further showed that \Greedy{} is optimal for this problem.
However, the asymptotic bound on the discrepancy achieved by \Greedy{} had remained unclear.

There are two key conceptual reasons for this gap.
First, both the lower and upper bounds of \cite{AjtaiAspnesNaorRabaniSchulmanWaarts98} are based on tracking the standard quadratic potential $\|d_t\|_2^2$. But using this to control $\|d_t\|_{\infty}$ is inherently lossy and cannot bridge the $T^{1/3}$ versus $T^{1/2}$ gap.
Second, there is a simple $\Omega(T^{1/2})$ lower bound for arbitrary $2$-sparse vectors $v_t$
of the form $a\mathbf{1}(i) + b\mathbf{1}(j)$ with $a,b \in [-1,1]$. 
Thus, any sub-square-root upper bound must exploit the additional
``combinatorial structure'' of carpooling: the two non-zero coordinates
of $v_t$ have equal magnitudes and opposite signs.
We describe these in detail in Section \ref{sec:overview}, as they will be crucial for understanding our ideas.

\medskip 

{\bf Stochastic Setting.}
 Here we assume that there is an underlying graph $G$ and at each time step, the request is an edge of $G$ sampled uniformly at random.
Ajtai et al.~\cite{AjtaiAspnesNaorRabaniSchulmanWaarts98} considered the case where  $G$ is the complete graph $K_n$, and showed that the expected discrepancy is $O(\log \log n)$ at any time; analogous to the power of two-choices phenomenon in load balancing~\cite{azar1994balanced, berenbrink2000balanced}.

The harder setting of general graphs $G$ was studied by Gupta et al.~\cite{GuptaKrishnaswamyKumarSingla20}, who gave an algorithm with polylog($T$) discrepancy. They assume that $G$ is known to the algorithm, and use expander decomposition techniques together with the ideas of \cite{peres2015graphical} to obtain the bound.
Currently, the best known bound for general graphs is $O(\log^{1/2} T)$, which follows from the breakthrough, and more general, results of Kulkarni, Reis, and Rothvoss \cite{KulkarniReisRothvoss24} and  Aden-Ali \cite{adenali2026optimalonlinediscrepancyminimization} on online discrepancy. 

On the other hand, the best lower bound is \(\Omega(\log^{1/3} T)\) \cite{AjtaiAspnesNaorRabaniSchulmanWaarts98}, which follows directly from the deterministic $\Omega(T^{1/3})$ lower bound above: 
roughly, the random process has sufficiently many independent opportunities
that one block of $\Omega(\log T)$ arrivals realizes the equal-discrepancy
choices used by the deterministic lower-bound adversary. Closing this   $\log^{1/3} T$ versus $\log^{1/2} T$ gap 
has been a natural open question, see e.g.,~Conjecture 2 in \cite{KulkarniReisRothvoss24}.
 
\subsection{Our results}
Our first result completely resolves the deterministic case by showing
that the \Greedy{} algorithm itself matches the lower bound of Ajtai
et al.~\cite{AjtaiAspnesNaorRabaniSchulmanWaarts98}.

\begin{theorem}[Deterministic upper bound]\label{thm:greedy}
For any sequence of \(T\) edge arrivals with weights
$w_t\in[0,1]$ on a graph on $n$ vertices, the \Greedy{} algorithm satisfies
\[
    \max_{t\le T}\|d_t\|_\infty
        = O\bigl(\min\{T^{1/3},n\}\bigr).
\]
\end{theorem}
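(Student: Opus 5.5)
The plan is to work with the top-sum potentials $\topsum{k}(t)$, the sum of the $k$ largest entries of $d_t$. The key structural fact to exploit is that \Greedy{} always moves weight $w_t$ from the endpoint with the larger discrepancy to the one with the smaller discrepancy. So every arrival is a Robin Hood transfer, and the sorted vector $d_t$ can only become more balanced in the majorization order, except for boundary effects of size $O(w_t)\le O(1)$ when the two values cross. (The $O(n)$ part of the bound is the easier half. It follows from the same monotonicity: a vertex can rise only by taking weight from a strictly larger one, and a pigeonhole over the at most $n$ distinct positions of the sorted vector shows consecutive sorted values can never be separated by a jump of more than $O(1)$ at the moment the top value increases. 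I treat this half as a corollary of the staircase lemma below.)

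\textbf{Step 1 (potential increases slowly).} Define $\Phi_t=\sum_{k\ge 1}\topsum{k}(t)^+$, a rank-weighted sum of positive parts. I will show $\Phi_{t}-\Phi_{t-1}=O(D_t)$, or better $O(1)$, per arrival, where $D_t=\|d_t\|_\infty$. The computation goes as follows. If the edge is $(i,j)$ with $d_{t-1}(j)\ge d_{t-1}(i)$, then $j$ loses $w_t$ and $i$ gains $w_t$. Each $\topsum{k}$ that contains $j$ but not $i$ weakly decreases, and those containing both or neither are unchanged. Only the $O(1)$ values of $k$ at which the ranks of $i$ and $j$ cross can gain, each by at most $2w_t$. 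Passing to positive parts only helps, because an $i$ that stays negative adds nothing. Summing over $t$ gives $\Phi_T=O(T)$.

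\textbf{Step 2 (staircase lemma).} This is the crux. I need to show that if some vertex $v$ first reaches discrepancy $D$ at time $t^\ast$, then $\Phi_{t}\ge cD^3$ for some $t\le t^\ast$, which together with Step 1 gives $D=O(T^{1/3})$. The intuition matches the lower-bound construction of Ajtai et al.: the extremal configuration is a staircase of about $D$ vertices with discrepancies $D,D-1,\dots,1$, and for it $\sum_k \topsum{k}=\Theta(D^3)$. To prove it I will argue backwards in time. Vertex $v$ increased from about $D-1$ to $D$ only by an edge to a partner $u$ with $d(u)\ge D-1$ at that time, and $u$ in turn reached $D-1$ via a partner at level $\ge D-2$, and so on. This produces a witness tree of vertices that at various times occupied levels $D,D-1,\dots$. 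Since weights are in $[0,1]$, I would first reduce to unit weights by grouping fractional increments into unit climbs, losing only constants.

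\textbf{Step 3 (the main obstacle).} The difficulty is that the staircase need not exist simultaneously: vertices can drop later, and a drop is exactly what \Greedy{} performs when a high vertex meets a lower one. So I cannot simply read $\Phi\ge cD^3$ off a single time. What I would try first is to replace the instantaneous bound by an amortized one. Each climb of a vertex from level $\ell-1$ to $\ell$ consumes a partner at level $\ge \ell-1$, and that partner must itself have been built. Charging this cost and summing over $\ell\le D$ gives a recursion of the form $C(D)\ge C(D-1)+\Omega(D^2)$ for the number of arrivals needed, hence $T\ge C(D)=\Omega(D^3)$. The $\Omega(D^2)$ term should come from the fact that raising the top by one requires a second vertex at height about $D$. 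That vertex in turn needs about $D$ unit climbs, each fed by a partner near its own level, and the partners used at distinct heights cannot be reused without being rebuilt. Step 1's potential would serve to make this ``cannot be reused'' precise: rebuilding a fallen vertex costs arrivals in proportion to the potential it regains.

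If the charging becomes unwieldy, my fallback is to prove directly the potential inequality $\Phi_{t^\ast}\ge cD^3$ for a modified, time-decayed potential. Making either version rigorous is the step I expect to take most of the work.
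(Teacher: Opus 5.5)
Your intuition (top sums, a staircase witnessed backwards in time) is the right one, but there is a genuine gap, and it starts with Step~1, which is false. The potential $\Phi_t=\sum_{k\ge1}S_k(t)^+$ does not change by $O(1)$, or even $O(D_t)$, per arrival. Start with all $n$ chips at $0$ and request any edge with weight $1$. The configuration becomes $(1,0,\dots,0,-1)$, so every $S_k$ with $k\le n-1$ jumps from $0$ to $1$, and $\Phi_1=n-1$. Your claim that only $O(1)$ ranks can gain fails because in an expanding move the two endpoints swap past every chip lying between them, and all of those ranks gain. (When the total is $0$, every $S_k\ge0$ and $\sum_k S_k=\tfrac12\sum_{i<l}|d_t(i)-d_t(l)|$, which is inherently non-local.) Truncating to $k\le D$ caps the per-step increase at $O(D)$, but then $\Phi\ge cD^3$ only yields $D=O(T^{1/2})$.

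What works is the local potential $\Phi_+(t)=\sum_i (d_t(i)_+)^2$. A move from positions $a\le b$ changes it by at most $2w(a_+-b_+)+2w^2\le 2$, so $\Phi_+\le 2T$ always. With this potential you never need a whole staircase at one time. It suffices that a single top sum, with $r=\lfloor D/2\rfloor$, satisfies $S_r(t')\ge D^2/8$ at some $t'$, since Jensen then gives $\Phi_+(t')\ge S_r(t')^2/r\ge D^3/32$.

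Steps~2--3 leave exactly the crux unproved. The recursion $C(D)\ge C(D-1)+\Omega(D^2)$ and the claim that partners ``cannot be reused without being rebuilt'' are the whole difficulty, and chasing individual vertices through a witness tree gives no handle on reuse. The missing idea is to chase top sums instead, via a one-step inequality. Suppose $S_r(t)>S_r(t-1)$ and $R$ attains $S_r(t)$. Then:
\begin{itemize}
\item the rising chip $u$ is in $R$ and the falling chip $v$ is not;
\item $d_t(u)\ge S_r(t)-S_{r-1}(t)$ and $d_t(u)-d_t(v)\le 2$;
\item $R\cup\{v\}$ has the same total before and after the move.
\end{itemize}
Together these give $S_{r+1}(t-1)\ge 2S_r(t)-S_{r-1}(t)-2$.

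Now set $A_r=r(D-r+1)$, which satisfies $A_{r+1}=2A_r-A_{r-1}-2$, and let $t_r$ be the \emph{first} time with $S_r\ge A_r$. Induction gives $T\ge t_1>t_2>\cdots$. Since $t_r<t_{r-1}$, minimality gives $S_{r-1}(t_r)<A_{r-1}$. Meanwhile $S_r$ crosses $A_r$ at $t_r$, so $S_{r+1}(t_r-1)\ge A_{r+1}$. This ordering of first-crossing times is what absorbs the non-simultaneity you identify in Step~3. Taking $r=n$, where $S_n\equiv 0$ is conserved, gives $D\le n$.

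Your parenthetical pigeonhole argument for that half is not valid as stated. Configurations $(k,k,0,\dots,0,-k,-k)$ are reachable: take two disjoint copies of $(k,0,\dots,0,-k)$, which can be built recursively by pairing tops and bottoms and cancelling the leftovers with contracting moves. Pairing the two top chips then raises the maximum across a gap of $k$. The reduction to unit weights is also unnecessary, since the one-step inequality holds for every $w\in[0,1]$.
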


This cube-root upper bound for carpooling is perhaps surprising given the $\Omega(T^{1/2})$ lower bound for 2-sparse vectors, mentioned previously. Indeed, our analysis will crucially exploit that vectors have the form $w_t(\mathbf{1}(i)-\mathbf{1}(j))$.

A key idea is to track a hierarchy of top-\(r\) potentials that captures the entire profile of the discrepancy vector \(d_t\), and use this to prove a refined structural property that we call  {\em spreadness}.
 Roughly, this says that if the discrepancy ever reaches $D$ at some time $t$, then at some previous time $t'\leq t$, there must be a set of $\Omega(D)$ coordinates whose average absolute discrepancy is $\Omega(D)$.
This forces the quadratic potential at that earlier time to be $\Omega(D^3)$; since
\Greedy{}'s quadratic potential is always $O(T)$, the desired
$D=O(T^{1/3})$ bound follows.

Our second result considers the stochastic setting, where we show that a similar cube-root phenomenon holds for regular graphs with $T=O(n)$ requests.

\begin{theorem}[Stochastic upper bound]\label{thm:uniform-rand-reg}
For $T=O(n)$ edges sampled independently from a \(\Delta\)-regular graph \(G\),  there is an  algorithm that orients the edges online such that 
w.h.p.~$\max_{t\leq T}\|d_t\|_\infty
    =O\bigl((\log n)^{1/3}\bigr)$.
Further, for \(\Delta=\Omega(\log^3 n)\) we have the improved bound
\begin{equation}
\label{eq:stoch}
    \max_{t\leq T}\|d_t\|_\infty =
        O\big((\log n/ \log\Delta)^{1/3}
        +\log\log n\big).
\end{equation}
This also holds in the edge-weighted setting, and in fact the weights \(w_t\) may even be chosen by an adaptive adversary based on the
previous samples \(e_1,\ldots,e_{t-1}\).
\end{theorem}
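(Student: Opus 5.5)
Our plan is to reuse the spreadness machinery behind Theorem~\ref{thm:greedy}, but to replace the deterministic bound ``quadratic potential is $O(T)$'' with a much stronger probabilistic statement about how many vertices can simultaneously carry large discrepancy. The algorithm we analyze is \Greedy{} itself, possibly with ties broken randomly. The deterministic argument shows that if some vertex reaches discrepancy $D$ at time $t$, then at an earlier time $t'$ there is a set $S$ of $\Omega(D)$ vertices whose average absolute discrepancy is $\Omega(D)$. Moreover, by tracing back how \Greedy{} was forced up, we expect this set to be \emph{connected} in $G$, or at least to be built from a certificate tree of arrivals. In the stochastic setting we will bound the probability that any such spread configuration ever appears by a union bound over these certificates.

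The key steps are as follows.
\begin{enumerate}[label=(\roman*)]
\item Refine the spreadness lemma into a witness structure. Reaching $D$ yields a rooted tree (or layered structure) of $k=\Theta(D)$ vertices. The tree is connected through edges of $G$ that were sampled, and it carries $\Omega(D^2)$ total ``work'', that is, roughly $\Omega(D^2)$ sampled edges incident to $S$ during a window in which greedy kept pushing mass toward the root. This mirrors the $\Omega(D^3)$ quadratic potential, now read as a count of specific samples.
\item Count witnesses. There are at most $n\cdot(e\Delta)^{k}$ connected vertex sets of size $k$ containing a given root.
\item Bound the probability of a single witness. Since $T=O(n)$ and each edge is sampled with probability $1/|E| = 2/(n\Delta)$, the set $S$ receives only about $k\Delta\cdot T/(n\Delta)=O(k)$ incident arrivals in expectation. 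Requiring $\Omega(D^2)=\Omega(k^2/\cdot)$ arrivals, or more precisely a Chernoff-type excess of $\Omega(D\cdot k)$, costs probability $\exp(-\Omega(D^2\log(D)))$ or better. Note also that edges inside $S$ are rare, with probability $O(k/n)$ per edge, and this is where the $\log\Delta$ gain enters: forcing $\Omega(D^2)$ arrivals to concentrate on a specific $k$-set of a $\Delta$-regular graph costs $\Delta^{-\Omega(D^2)/\cdot}$-type factors.
\item Combine. Balancing $n(e\Delta)^{k}\cdot\Delta^{-\Omega(D^3/k)}$ with $k\approx D$ gives failure probability at most $n^{-1}$ once $D^3\log\Delta\gtrsim\log n$, i.e.\ $D=O((\log n/\log\Delta)^{1/3})$. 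For small $\Delta$, replace the $\Delta$ gain with a constant-factor gain to obtain $(\log n)^{1/3}$.
\end{enumerate}

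The additive $\log\log n$ term comes from low-level vertices. Discrepancies up to $O(\log\log n)$ arise as in the two-choices analysis of Ajtai et al.\ for $K_n$ and must be handled separately. We plan to run a layered induction (``the fraction of vertices with $|d|\ge i$ decays doubly exponentially'') for levels up to $\Theta(\log\log n)$, which needs $\Delta=\Omega(\log^3 n)$ so that neighborhood fractions concentrate. Above that level, the witness argument applies with the residual count of heavy vertices. Adaptive weights are harmless because the witness argument only uses the randomness of which edges $e_t$ arrive, and weights in $[0,1]$ only make each push smaller.

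We expect the main obstacle to be step~(i): converting the deterministic, potential-based spreadness statement into a \emph{local} combinatorial certificate that is connected in $G$ and whose existence has a cleanly bounded probability. The top-$r$ potentials are global, so locality is not automatic. Our first attempt will be to run the spreadness argument on the sub-process restricted to the vertices reachable from the high-discrepancy vertex through recently sampled edges, exploiting the fact that \Greedy{} decisions depend only on endpoint discrepancies.
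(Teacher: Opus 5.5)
Your proposal has a genuine gap at its quantitative core, and it does not reach the cube root. The certificate in step (i), namely $k=\Theta(D)$ vertices carrying $\Omega(D^2)$ arrivals, only reproduces the square-root barrier. With $k\approx D$, your own bound $n(e\Delta)^k\Delta^{-\Omega(D^3/k)}$ becomes $n(e\Delta)^D\Delta^{-\Omega(D^2)}$. This falls below $n^{-1}$ only once $D=\Omega((\log n/\log\Delta)^{1/2})$, so the condition $D^3\log\Delta\gtrsim\log n$ does not follow. Step (iii) as written also only yields $\exp(-\Omega(D^2\log D))$, with no $D^3$ anywhere.

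The missing factor of $D$ comes from the quadratic potential, not from counting arrivals at the spread set. Spreadness gives $\Phi_+\ge D^3/32$ at some earlier time. Each expanding move raises $\Phi_+$ by at most $2$, so the high chip must lie in a structure containing $\Omega(D^3)$ sampled edges.

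The locality problem you flag as the main obstacle also disappears once you notice that \Greedy{} acts independently on each connected component of $\Gsamp$ (formed in hindsight). Theorem~\ref{thm:greedy} then applies verbatim to each component's request subsequence, and the component itself is the connected witness. A standard witness-tree union bound finishes the first claim. There are at most $4^r n\Delta^r$ connected $r$-edge multisets, and each is present with probability at most $(1/(8\Delta))^r$ when $T=n/16$; general $T=O(n)$ is handled by $O(1)$ batches, which your plan of running \Greedy{} on the whole sequence does not address. Hence all components have $O(\log n)$ edges, giving $O((\log n)^{1/3})$ immediately.

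For the refined bound, the layered two-choices induction you propose is not justified. In an arbitrary $\Delta$-regular graph with $T=O(n)$, each vertex sees only $O(1)$ arrivals, so there is no reason for ``neighborhood fractions'' to concentrate. You also never say how the low levels hand off to the witness argument. In the paper, the $\log\log n$ term is simply $2\log_2 m$ for component size $m=O(\log n)$, and $\log\Delta$ enters through surplus edges. Two ingredients are missing from your plan.

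First, a refined witness count shows each component has only $p=O(\log n/\log\Delta)$ edges beyond a spanning tree when $\Delta=\Omega(\log^3 n)$. There are at most $4^m n\Delta^{m-1}m^{2p}$ such witnesses, giving probability at most $\frac{8n}{2^m}(m^2/(8\Delta))^p$.

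Second, a deterministic lemma: \Greedy{} on a connected multigraph with $m$ vertices and $p$ surplus edges has discrepancy $O(p^{1/3}+\log m)$. It is proved by running the chip game on shifted positions $x_t(u)=d_t(u)-2\log_2 c_t(u)$, where $c_t(u)$ is the current component size, together with a far-left sink chip. Internal requests are single chip moves. Each merging request can be simulated using only contracting moves, so Lemma~\ref{lem:chip-game} applies with at most $p$ expanding moves. The cube root is therefore paid only on the $p$ internal edges, which is exactly what produces $(\log n/\log\Delta)^{1/3}$.
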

In particular, this 
generalizes the bound
$O(\log \log n)$ of \cite{AjtaiAspnesNaorRabaniSchulmanWaarts98} for complete graphs to arbitrary $n^{\Omega(1)}$-regular graphs. 

We prove a matching lower bound for the cube-root term  in
\eqref{eq:stoch}; the additive
$\log\log n$ term is already known to be necessary for complete graphs.

\begin{theorem}[Stochastic lower bound]
    \label{thm:cube-root-lb}
For every sufficiently large $n$ and every $\Delta=\Omega(\log n)$, there exists a $\Delta$-regular graph $G$ on $n$ vertices\footnote{We consider only parameter pairs $(n,\Delta)$ for which a $\Delta$-regular graph on $n$ vertices exists; equivalently, $\Delta<n$ and $n\Delta$ is even.} such that, for $n$ i.i.d.~requests from $G$, every online algorithm has, with probability $1-o(1)$,
\[
    \max_{t\le n}\|d_t\|_\infty
    =\Omega((\log n/\log \Delta)^{1/3}).
\]
\end{theorem}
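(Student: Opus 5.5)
The plan is to build $G$ as a disjoint union of many small gadgets, each a $\Delta$-regular graph on $m=\Theta(\Delta)$ vertices (for instance $K_{\Delta+1}$, or a complete bipartite-like block adjusted to fix parity), with about $n/m$ copies. Inside one gadget we will run the deterministic adversary of Ajtai et al.~\cite{AjtaiAspnesNaorRabaniSchulmanWaarts98}. That adversary forces discrepancy $D$ on an $m$-vertex graph within $O(D^3)$ arrivals, as long as $D\le m^{1/3}$. At each step it requests an edge between two vertices of currently equal discrepancy, chosen from a prescribed level set. Because a single uniformly random edge of $G$ lands in a given gadget on a given prescribed pair with probability about $1/(n\Delta)$, we will not try to simulate the adversary edge by edge. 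Instead we look for a gadget and a time window in which the random arrivals happen to coincide with what the adversary would have chosen, whatever the online algorithm does.

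The key steps are as follows. First, we fix the adversary's structure. We set $D=c(\log n/\log\Delta)^{1/3}$ and $L=C D^3$, the number of arrivals the adversary needs. The adversary's strategy adapts to the algorithm's orientations, but at every step any edge joining two vertices of equal discrepancy within the current target class is acceptable. Second, we lower bound the success probability in a single gadget. Condition on the event that gadget $g$ receives exactly $L$ consecutive-in-gadget arrivals during a window. By regularity, each arrival within the gadget is a uniform edge of the gadget, so it is an acceptable adversary edge with probability at least $\Delta^{-O(1)}$. (The acceptable set contains an edge between two specific equal-discrepancy vertices; for $K_{\Delta+1}$ this has probability $\ge 1/\binom{\Delta+1}{2}$, and we can refine this by allowing any of many equal pairs.) So all $L$ steps succeed with probability at least $\Delta^{-O(L)}=\exp(-O(D^3\log\Delta))=n^{-O(c^3)}$. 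Third, we amplify. Over $n$ total arrivals each gadget gets about $m$ arrivals, that is $\Theta(m/L)$ disjoint blocks, and there are $n/m$ gadgets. This gives $\Omega(n/L)$ essentially independent trials. With $c$ small, $n^{1-O(c^3)}/L\to\infty$, so with probability $1-o(1)$ some block succeeds. We also need gadget sizes to satisfy $m\ge D^3$, and $\Delta=\Omega(\log n)$ ensures this.

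The main obstacle is the adaptivity and independence in the second and third steps. The success event in a block depends on the algorithm's previous orientations, and blocks in the same gadget interact through the existing discrepancy. We handle this in two ways. We make the adversary's requirement "reset-robust": it only needs the gadget to start a block in some state, and we run the adversary relative to the starting discrepancies, since the argument forces growth of the discrepancy relative to initial values on a level set, or we simply use only one block per gadget. Then each trial succeeds with conditional probability at least $n^{-O(c^3)}$ given the past, regardless of the algorithm. A martingale or sequential-conditioning argument then shows that the probability all trials fail is at most $(1-n^{-O(c^3)})^{\Omega(n/(mL))}=o(1)$. For this we need $n/m$ gadgets each with at least $L$ arrivals w.h.p., which holds since $m\approx\Delta$ and each gadget gets $\approx \Delta\gg L$ arrivals in expectation. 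For larger $\Delta$, where $n/\Delta$ may be small, we will instead take gadgets of size $\Theta(\Delta)$ built so that many vertex-disjoint sub-blocks inside one gadget serve as independent trials.
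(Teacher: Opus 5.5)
Your proposal is correct and follows essentially the same route as the paper. The paper also uses disjoint copies of $K_{\Delta+1}$, runs the Ajtai et al.\ equal-discrepancy adversary on a fixed $h$-subset of each clique during its first $h^3$ arrivals (success probability $\Delta^{-O(h^3)}=n^{-O(c^3)}$ per clique), gets independence across cliques by revealing edges chronologically, and Poissonizes to decouple the per-clique arrival counts. Your multi-block and reset-robustness variants are not needed, and neither is your separate, too-vague construction for large $\Delta$: once $\Delta\ge n^{\Omega(1)}$ the target bound is $O(1)$ and holds trivially after the first arrival, which is exactly why the paper restricts to $\Delta<n^{1/64}$.
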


The proof of Theorem \ref{thm:uniform-rand-reg} builds on the witness tree approach of Kenthapadi and Panigrahy \cite{kenthapadi2005balancedallocationgraphs}, who generalized the classic power of 2-choices in load balancing from complete graphs to general $\Delta$-regular graphs with $T=O(n)$ arrivals to show an $O((\log n / \log \Delta) + \log \log n)$ bound.
While the $(\log n / \log \Delta)$ bound is optimal for load-balancing, we can get an 
improved cube-root dependence  $(\log n/ \log \Delta)^{1/3}$ in the discrepancy setting by combining this with the ideas in Theorem \ref{thm:greedy}. A direct application of Theorem \ref{thm:greedy} only gives the weaker $O(\log^{1/3} n)$ bound. Obtaining the improved bound in \eqref{eq:stoch} requires opening the black-box and additional new ideas.

{\bf Open question.}
Theorem \ref{thm:uniform-rand-reg} suggests that, for $T$ edges sampled
independently and uniformly from an arbitrary (possibly non-regular) graph $G$,
there is an online algorithm with discrepancy $O(\log^{1/3} T)$. We leave this as an
interesting open question.

\subsection{Overview and Techniques }
\label{sec:overview}
To motivate our approach, we begin by describing the elegant $\Omega(T^{1/3})$ and $O(T^{1/2})$ bounds of \cite{AjtaiAspnesNaorRabaniSchulmanWaarts98} based on the quadratic potential, and see the reasons for this gap.
For simplicity, we assume throughout this section that all edges have weight $1$.

\medskip {\bf Quadratic potential.}
Let $d_t$ denote the discrepancy vector at the end of time $t$. Consider the quadratic potential
$\Phi(t) = \|d_t\|_2^2 = \sum_i d_t(i)^2 $.

The key observation is that for an arriving edge $e_{t}=(i,j)$ (equivalently the vector $v_t = \mathbf{1}(i)-\mathbf{1}(j)$), if $d_{t-1}(i)=d_{t-1}(j)$ at its endpoints, then $\Phi(t)$ must increase by $+2$ irrespective of how $e_t$ is oriented.
On the other hand, if $d_{t-1}(i) \neq d_{t-1}(j)$, choosing the orientation (sign $\epsilon_t$ for $v_t$) greedily cannot increase $\Phi$. This follows as
\begin{align} \Phi(t) - \Phi(t-1) & = 
\|d_{t-1} + \epsilon_t v_t\|_2^2  - \|d_{t-1}\|_2^2 \nonumber \\
&=
2 \epsilon_t ( d_{t-1} \cdot v_t ) + \|v_t\|_2^2 = -2 |d_{t-1}(i)-d_{t-1}(j)| + 2. \label{eq:lbpot}
\end{align}

\medskip

{\bf Previous bounds and limitations.}
As $\Phi(t)$ increases by at most \(+2\) per request and $\Phi(0)=0$, we have \(\Phi(t)\le 2t\le 2T\) trivially, giving the upper bound $\|d_t\|_\infty =O(\sqrt{T})$ for each \(1\leq t\leq T\).

For the lower bound, fix some time $T \leq n^3$, and let $h=T^{1/3}$. We will only request edges $(i,j)$ with $i,j\in [h]$. We have a win-win argument.
If $\|d_t\|_\infty \geq h/2$ for some $t\leq T$, we already have the $\Omega(h)$ lower bound. Else, at each time $t$ all discrepancies lie in $(-h/2,h/2)$ and by pigeonhole, there must exist some $i,j \in [h]$ with $d_{t-1}(i)=d_{t-1}(j)$. Requesting $e_t=(i,j)$ thus forces $\Phi$ to rise by $+2$,  eventually giving $\Phi(T)=\|d_T\|_2^2 = 2T = 2h^3$. Since $d_T$ is supported only on vertices in $[h]$,  some vertex must have $\Omega(h)$ discrepancy.

 Intuitively, the problem is that using a bound on $\|d_t\|_2^2$ to infer $\|d_t\|_\infty$ is inherently lossy, as one cannot distinguish whether the contribution to $\|d_t\|_2^2$  is coming from one large coordinate, or multiple large coordinates. This  leads to the cube-root versus square-root gap.

\medskip
{\bf Lower bound for 2-sparse vectors.}
Second, an analysis that treats carpooling requests merely as arbitrary bounded $2$-sparse vectors cannot yield an $o(T^{1/2})$ bound.
Indeed, there is a trivial $\Omega(T^{1/2})$ lower bound for such vectors, already when $n=2$ ---
at each time $t$, we can always choose some unit length vector $v_t$ supported on the first two coordinates that is orthogonal to $d_{t-1}$.
By orthogonality, for either choice of sign,
\[
\|d_{t-1}+\epsilon_t v_t\|_2^2
 =\|d_{t-1}\|_2^2+\|v_t\|_2^2
 =\|d_{t-1}\|_2^2+1.
\]
Thus $\Phi(t)=t$, and hence $\|d_t\|_\infty=\Omega(t^{1/2})$ as its support has size $n=2$.

\medskip
{\bf Deterministic Arrivals and Spreadness.}
Our main insight is that under the \Greedy{} algorithm, there must be times when the $\ell_2$-mass in $d_t$ must be spread among several coordinates.
In particular, if the discrepancy $\|d_T\|_\infty = D$ at time $T$, then at some earlier time $t\le T$ there is a set $R$ of $\Omega(D)$ coordinates whose average absolute discrepancy is $\Omega(D)$. We call this the spreadness property. It implies,  by Cauchy--Schwarz, that
\[
    \Phi(t)=\sum_i d_t(i)^2
    \geq \frac{1}{|R|}\left(\sum_{i\in R}|d_t(i)|\right)^2
    =\Omega(D^3).
\]
As $\Phi(t)\le 2t\le 2T$, this yields
$D=O(T^{1/3})$.

To show that such a set $R$ must exist at some earlier time, we simultaneously track, for every $r$, the top-$r$ sum $S_r(t)$, defined as the sum of the $r$ largest signed discrepancies at time $t$.
However, one serious difficulty is that the mass need not be spread at all at any given time. E.g., even if the mass is currently spread, the adversary can give new requests to bring all coordinates back close to $0$, except the one with the largest value (leaving the discrepancy unchanged).

To handle this we prove a stronger inductive property than spreadness.  We track the evolution of the $S_r(t)$ over time. We show that if a particle reaches high discrepancy at time $T$, then, for every $r$, there is a time $t_r\leq T$ at which $S_r(t_r)$ crosses a suitably-chosen threshold. The existence of such times where $S_r$ is large is proved via a careful backward induction over time.

\medskip
{\bf Stochastic arrivals.} We now describe the key ideas for Theorem \ref{thm:uniform-rand-reg}.
By dividing the $T=O(n)$ requests into $O(1)$ batches of size $cn$  for $c \ll 1$, it suffices to prove the bound for such a batch. Let \(G_{\mathrm{samp}}\) denote the \textit{sampled multigraph},
 formed by the \(cn\) sampled requests. A standard witness tree
argument shows that w.h.p.~every connected component $C$ of
\(G_{\mathrm{samp}}\) has \(O(\log n)\) vertices and \(O(\log n)\) edges.
As different components do not interact, one can essentially apply the bound on \Greedy{} from Theorem \ref{thm:greedy} to each component individually, which immediately gives the
\(O((\log n)^{1/3})\) discrepancy bound.

\medskip
{\bf  The stronger bound \eqref{eq:stoch}.} Proving the stronger bound given by \eqref{eq:stoch} is more subtle, and requires two additional key ideas.

First, adapting the witness-tree framework of Kenthapadi and Panigrahy
\cite{kenthapadi2005balancedallocationgraphs}, we show that when \(\Delta\) is
large, every connected component \(C\) of \(G_{\mathrm{samp}}\) is nearly a
tree. In particular, if we view $C$ as consisting of a spanning tree with $m-1$ edges plus $p$ surplus internal edges, then w.h.p. we have $m=O(\log n)$ and $p = O(\log n/\log \Delta)$.

Second, we exploit this structure and prove a stronger $O(p^{1/3}+\log m)$ bound for such tree-like adversarial request sequences, in contrast to the general \(O((m+p)^{1/3})\) bound from Theorem \ref{thm:greedy}. Notice that the cube-root dependence now is only on the surplus edges $p$, while the dependence on $m$ is logarithmic. 

To prove this refined bound, we analyze the effect of edge arrivals differently depending on whether they merge two different components, or are already inside some component.
To show that only the internal edges incur the cube-root dependence, a key idea is to consider another auxiliary carpooling instance with one extra vertex, where the discrepancies of the original vertices are shifted by the logarithm of the current size of the component containing them. We can then use the analysis for Theorem \ref{thm:greedy} on this instance and infer the refined bound. We refer to Section \ref{sec:stochastic-arrivals} for details.

\section{Preliminaries}
\label{sec:preliminaries}

We begin by defining the problem formally, using the discrepancy formulation.

 There are \(n\) coordinates indexed by $[n]$ and an unknown time horizon $T$. At each time $t \in [T]$, a request arrives online, specified by a weight $w_t \in [0,1]$ and two distinct coordinates $i_t,j_t \in [n]$, and the algorithm must pick a sign $\epsilon_t \in \{-1,1\}$.
Let $v_t$ denote the vector $w_t(\mathbf{1}(i_t)-\mathbf{1}(j_t))$, and  $d_t = \sum_{s \leq t} \epsilon_s v_s$ the discrepancy vector at the end of time $t$.
The goal is to minimize $D := \max_{t} \|d_t\|_\infty$.

\medskip
{\bf The \Greedy{} algorithm and chip view.}
We will only consider the \Greedy{} algorithm, which, upon the request at time \(t\), if
\(d_{t-1}(i_t)\le d_{t-1}(j_t)\), chooses $\epsilon_t$ so that:
\begin{equation}
\label{eq:greedy-update}
    d_t(i_t)=d_{t-1}(i_t)+w_t
    \qquad\text{and}\qquad
    d_t(j_t)=d_{t-1}(j_t)-w_t.
\end{equation}
It will be convenient to view the discrepancies as \(n\) chips on the real line, one for each coordinate, where each chip $i$ is at position $d_t(i)$ at the
end of time \(t\).
In this {\em chip view}, each request picks two chips, and \Greedy{} moves the left chip by $w_t$ to the right, and the right chip by $w_t$ to the left.

The question is: how far can an adversary push
any chip from the origin?

\medskip{\bf Expanding and contracting moves.}
For the purpose of our analysis, we will distinguish two
types of chip moves by \Greedy{}. Consider time \(t\) and let $i,j$ denote the requested chips so that
\(d_{t-1}(i)\le d_{t-1}(j)\). A move is \emph{\expanding} if
\(d_t(i)>d_{t-1}(j)\), and \emph{\contracting} otherwise. In words, a move is
\expanding\ if chip $i$ moves to the right of chip $j$'s previous position.
Equivalently, an \expanding\ move increases the distance between the two
chips $i$ and $j$.
See Figure \ref{fig:chip-game-moves}.
\begin{figure}[ht]
\centering
\begin{tikzpicture}[
    x=0.78cm,
    y=0.78cm,
    >=Latex,
    axis/.style={-{Latex[length=1.6mm]},line width=.5pt},
    moveu/.style={-{Latex[length=1.8mm]},densely dotted,line width=1pt,red!75!black},
    movev/.style={-{Latex[length=1.8mm]},densely dotted,line width=1pt,green!45!black},
    chipu/.style={circle,draw=red!75!black,fill=red!20,minimum size=5.5mm,inner sep=0pt},
    chipv/.style={circle,draw=green!45!black,fill=green!20,minimum size=5.5mm,inner sep=0pt},
    otherchip/.style={circle,draw=blue!60!black,fill=blue!20,minimum size=4.7mm,inner sep=0pt},
    every node/.style={font=\small}
]

\begin{scope}
    \node[font=\bfseries] at (1.5,3.0) {Contracting move};
    \node[left] at (-1.75,2) {Before:};
    \node[left] at (-1.75,0) {After:};
    \draw[axis] (-1.45,2) -- (4.65,2) node[right] {$x$};
    \draw[axis] (-1.45,0) -- (4.65,0) node[right] {$x$};
    \foreach \x in {-1,0,1,2,3,4}{
        \draw (\x,1.92) -- (\x,2.08);
        \draw (\x,-.08) -- (\x,.08);
    }
    \draw[moveu] (0,1.75) .. controls (.05,.95) and (.72,.75) .. (1,.25);
    \draw[movev] (3,1.75) .. controls (2.95,.95) and (2.28,.75) .. (2,.25);
    \node[chipu] at (0,2) {$i$};
    \node[chipv] at (3,2) {$j$};
    \node[otherchip] at (-1,2) {};
    \node[otherchip] at (4,2) {};
    \node[chipu] at (1,0) {$i$};
    \node[chipv] at (2,0) {$j$};
    \node[otherchip] at (-1,0) {};
    \node[otherchip] at (4,0) {};
\end{scope}

\begin{scope}[xshift=7.8cm]
    \node[font=\bfseries] at (.25,3.0) {Expanding move};
    \node[left] at (-2.1,2) {Before:};
    \node[left] at (-2.1,0) {After:};
    \draw[axis] (-1.8,2) -- (2.65,2) node[right] {$x$};
    \draw[axis] (-1.8,0) -- (2.65,0) node[right] {$x$};
    \foreach \x in {-1,0,1,2}{
        \draw (\x,1.92) -- (\x,2.08);
        \draw (\x,-.08) -- (\x,.08);
    }
    \draw[moveu] (0,1.75) .. controls (.05,.95) and (.72,.75) .. (1,.25);
    \draw[movev] (.75,1.75) .. controls (.70,.95) and (.08,.75) .. (-.25,.25);
    \node[chipu] at (0,2) {$i$};
    \node[chipv] at (.75,2) {$j$};
    \node[otherchip] at (-1.5,2) {};
    \node[otherchip] at (2,2) {};
    \node[chipu] at (1,0) {$i$};
    \node[chipv] at (-.25,0) {$j$};
    \node[otherchip] at (-1.5,0) {};
    \node[otherchip] at (2,0) {};
\end{scope}
\end{tikzpicture}
\caption{Expanding moves increase the distance between the requested chips,
whereas contracting moves do not. The unrequested chips (marked blue) do not
move.}
\label{fig:chip-game-moves}
\end{figure}

\section{Deterministic Setting}
\label{sec:chip-game}
In this section, we prove the bound on \Greedy{} stated in Theorem \ref{thm:greedy}.

In fact, we will show the following slightly stronger result, which allows the starting positions to be negative, and bounds the positive discrepancy in  terms of expanding moves. This form will be useful later in Section \ref{sec:stochastic-arrivals}. 







\begin{lemma}
\label{lem:chip-game}
Suppose that initially $d_0(i)\le0$ for every $i\in[n]$, and the request sequence leads to $p$ expanding moves. Then for every chip $i$ and every time $t$, we have $   d_t(i)\le \min(4p^{1/3},n)$.
\end{lemma}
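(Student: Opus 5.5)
The plan is to combine a potential that only expanding moves can increase with a spreadness statement forcing that potential to be large. Let $\Psi(t) := \sum_i (d_t(i)^+)^2$, where $x^+ = \max(x,0)$. By the hypothesis $d_0(i)\le 0$ we have $\Psi(0)=0$. The target is $\Psi(t)\le 2p$ for all $t$, together with the claim that a chip at height $D$ forces $\Psi(t')=\Omega(D^3)$ at some earlier time $t'$. With constants tracked carefully, this gives $D\le 4p^{1/3}$. The bound $D\le n$ should come out of the same spreadness argument, because it needs roughly $D$ distinct chips.

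\emph{Step 1: $\Psi$ only grows on expanding moves, and by at most $2$.} Let chips $i,j$ be requested with $a=d_{t-1}(i)\le b=d_{t-1}(j)$, and let $f(x)=(x^+)^2$, which is convex.
\begin{itemize}
\item If the move is contracting, then $a+w_t\le b$, so $[a+w_t,b-w_t]\subseteq[a,b]$ with the same sum. The pair $(a,b)$ majorizes $(a+w_t,b-w_t)$, hence $f(a+w_t)+f(b-w_t)\le f(a)+f(b)$.
\item If the move is expanding, then $|a-b|<w_t\le 1$. A direct computation, as in \eqref{eq:lbpot} with $f$ in place of $x^2$, gives an increase of at most $2w_t^2\le 2$.
\end{itemize}
Since there are $p$ expanding moves, $\Psi(t)\le 2p$ for all $t$.

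\emph{Step 2: spreadness.} Suppose some chip reaches $d_t(i)\ge D$. The basic local fact is this: under \Greedy{} a chip moves right only when it is the left chip of its pair. So whenever a chip crosses level $x$ upward, its partner sat at a position $\ge x-1$ at that moment. I would use this to prove, by backward induction, a statement about the top-$r$ sums $\topsum{r}$.

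The statement is: for every $r\le D/2$ there is a time $t_r\le t$ with $\topsum{r}(t_r)\ge r(D-2r)$, or a similar threshold. Equivalently, at time $t_r$ there are $r$ chips whose average is at least $D-2r$.

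The inductive step goes as follows. Start from the time $t_r$ at which $r$ chips are high, with their sum above threshold. Go back to the last time before $t_r$ at which $\topsum{r}$ was below a slightly smaller threshold. The move that raised $\topsum{r}$ there must have moved a chip up among the top $r$. Its partner was then also high, and since contracting moves cannot raise the top-$r$ sum past the partner, the partner was not among the top $r$. This supplies an $(r+1)$-st high chip and yields $t_{r+1}$ with the threshold decreasing by $O(1)$ per chip.

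I expect this step to be the main obstacle. The difficulty is that the set of high chips can disperse between times, as the adversary may pull chips back toward $0$. The induction therefore must be on the times $t_r$ and not on a single time, and the thresholds must be chosen so that each backward step loses only an additive $O(1)$ per chip.

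\emph{Step 3: conclusion.} Take $r\approx D/4$. At time $t_r$ there are $r$ chips with average at least about $D/2$, so by Cauchy--Schwarz
\[
\Psi(t_r)\ \ge\ \frac{\topsum{r}(t_r)^2}{r}\ \gtrsim\ \frac{D^3}{16}.
\]
Combining this with $\Psi(t_r)\le 2p$ gives $D\le 4p^{1/3}$ once constants are fixed. For the bound $D\le n$, run the induction all the way to $r\approx D$. The chips produced are distinct, and this forces $n\ge D$, up to the constant lost in the thresholds. Matching that constant exactly may require a slightly sharper variant of the argument: at each step the newly found chip is at least $1$ below the previous one but above $0$, which gives $D$ distinct chips.
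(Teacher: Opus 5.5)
Your Steps 1 and 3 are essentially the paper's argument: the potential $\Phi_+=\sum_i (d_t(i)_+)^2$, the bound $\Phi_+\le 2p$, and Jensen/Cauchy--Schwarz applied at an earlier time. The inductive step of Step 2, which is the heart of the proof, has a genuine gap. When $S_r$ increases at some move, you correctly get three facts:
the chip $u$ moving right lies in a maximizing $r$-set $R$; its partner $v$ does not; and $d(v)\ge d(u)-2$ afterwards.

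The gap is the claim ``its partner was then also high.'' That claim needs $d(u)$ itself to be large, and the only handle on $d(u)$ is $d(u)\ge S_r-S_{r-1}$ at that time. So you need an \emph{upper} bound on $S_{r-1}$ at the crossing time. Going back to the \emph{last} time before $t_r$ at which $S_r$ was below a threshold gives no such bound. That crossing can happen long before $t_r$, at a moment when $S_{r-1}$ was large: the other $r-1$ members of $R$ carry the sum and $u$ is low. If you use only that all chips are at most the global maximum $D$, thresholds of the form $rD-c_r$ degrade as $c_{r+1}=2c_r+2$. That yields only $O(\log D)$ spread-out high chips, nowhere near what the cube-root bound needs.

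The missing idea is to define $t_r$ as the \emph{first} time at which $S_r(t_r)\ge A_r$, with $A_r=r(D-r+1)$ chosen so that $2A_r-A_{r-1}-2=A_{r+1}$. Then prove by induction that $t_1>t_2>\cdots$. The induction step works as follows:
\begin{itemize}
\item Since $t_r<t_{r-1}$, minimality of $t_{r-1}$ gives $S_{r-1}(t_r)<A_{r-1}$.
\item Minimality of $t_r$ shows that $S_r$ increased at time $t_r$, so $u\in R$, $v\notin R$, and $d_{t_r}(u)\ge S_r(t_r)-S_{r-1}(t_r)$.
\item Therefore
\[
S_{r+1}(t_r-1)\ \ge\ d_{t_r-1}(R\cup\{v\})\ =\ d_{t_r}(R\cup\{v\})\ \ge\ 2S_r(t_r)-S_{r-1}(t_r)-2\ \ge\ 2A_r-A_{r-1}-2\ =\ A_{r+1},
\]
so $t_{r+1}\le t_r-1$.
\end{itemize}
This choice of thresholds also fixes your constant issue for $D\le n$; your $r(D-2r)$ would give only $D\le 2n$. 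If $D>n$, apply the statement with $r=n$ to get $S_n(t_n)\ge n(D-n+1)>0$. This contradicts the fact that $S_n$, the total sum, is preserved by every move and starts at most $0$.
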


By symmetry, this directly implies Theorem \ref{thm:greedy} when  all chips start at $0$.

The proof of Lemma \ref{lem:chip-game} has two steps. First, we consider a natural quadratic potential, and show that it is $O(p)$ at any time. Second, we show that if some chip reaches discrepancy $D>0$, then at some {\em earlier time} there is a set of $\Omega(D)$ chips whose average discrepancy is $\Omega(D)$. Combining these, we obtain the $O(p^{1/3})$ bound. This second statement is the main point of the argument.



\subsection{Potential and Spreadness}

For $x\in\R$, write $x_+:=\max\{x,0\}$. 
We consider the following potential for the positive discrepancies.
\[
    \Phi_+(t):=\sum_{i=1}^n (d_t(i)_+)^2.
\]
Note that we only sum over chips at non-negative positions. 
We have the following bound.
\begin{lemma}
\label{lem:energy}
If at most $p$ expanding moves have occurred by time $t$, then $\Phi_+(t)\le2p$.
\end{lemma}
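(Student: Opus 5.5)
The plan is to track $\Phi_+$ move by move. It starts at $0$, never increases under a contracting move, and increases by at most $2$ under an expanding move. Since $d_0(i)\le 0$ for all $i$, we have $\Phi_+(0)=0$. Summing over the steps up to time $t$ then gives $\Phi_+(t)\le 2\cdot(\#\text{expanding moves by time }t)\le 2p$.

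Write $f(x):=(x_+)^2$, which is convex and nondecreasing. Fix a step $t$ with requested chips $i,j$ and weight $w=w_t\in[0,1]$. Put $a=d_{t-1}(i)\le b=d_{t-1}(j)$. By \eqref{eq:greedy-update}, only these two chips move, to $a+w$ and $b-w$. So the change in $\Phi_+$ is
\[
\Delta := f(a+w)+f(b-w)-f(a)-f(b).
\]

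\emph{Contracting moves.} Here $a+w\le b$, hence $b-w\ge a$, so both new positions lie in $[a,b]$. The pair $(a+w,b-w)$ has the same sum as $(a,b)$ and is majorized by it. By convexity of $f$ (Karamata, or directly writing $a+w$ and $b-w$ as complementary convex combinations of $a$ and $b$), $\Delta\le 0$.

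\emph{Expanding moves.} We only need a crude bound, which holds for every move. First, $f(a+w)\le (a_++w)^2=f(a)+2wa_++w^2$. Second, $f(b-w)-f(b)\le -2wb_++w^2$, checked in three cases:
\begin{itemize}
\item if $b\ge w$, it is an equality;
\item if $0\le b<w$, it reads $-b^2\le -2wb+w^2$, i.e.\ $(w-b)^2\ge0$;
\item if $b<0$, it reads $0\le w^2$.
\end{itemize}
Adding the two estimates gives $\Delta\le 2w(a_+-b_+)+2w^2\le 2w^2\le 2$, using $a\le b$ and $w\le1$.

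Summing $\Delta$ over all steps up to time $t$ proves the lemma. The only point needing care is the contracting case, and there the main step is noticing that both new positions stay inside the old interval $[a,b]$, so convexity applies directly.
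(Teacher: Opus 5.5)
Your proof is correct and follows essentially the same route as the paper. Contracting moves are handled by convexity of $f(x)=x_+^2$, since both new positions lie in $[a,b]$. Expanding moves are handled by the bound $f(x+y)\le f(x)+2x_+y+y^2$ applied at $(a,w)$ and $(b,-w)$, which you verify by a case check, giving $\Delta\le 2w(a_+-b_+)+2w^2\le 2$. Your remark that $\Phi_+(0)=0$ because $d_0\le 0$ states explicitly a hypothesis the paper leaves implicit in the surrounding context.
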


\begin{proof}
Define the function $f(x)=x_+^2$, and note that $f$ is convex. 

Consider a move of weight $w$ from positions $a\le b$ to
$a+w$ and $b-w$.
If the move is contracting, then both $a+w$ and $b-w$ lie in $[a,b]$, and the potential can only decrease by the convexity of $f$.

We now show that each expanding move can increase $\Phi_+$ by at most $2$.

For all $x,y\in\R$ we have the inequality, $f(x+y)\le f(x)+2x_+y+y^2$.
Indeed, for $x\ge0$ the right side is $(x+y)^2$. For $x<0$ it is $y^2$, while the left side is at most $y^2$. 
Applying this with $(x,y) = (a,w)$ and with $(b,-w)$, and using that $a_+\le b_+$, $w\in [0,1]$
gives
\[
    \Phi_+(t)-\Phi_+(t-1) \le 2w(a_+-b_+)+2w^2 
    \le 2.
\qedhere \]
\end{proof}

\paragraph{Top sums and Spreadness.}
For a subset of chips $R\subseteq[n]$, let us denote $d_t(R):=\sum_{i\in R}d_t(i)$.

For $1 \leq r \leq n$, let 
    $S_r(t) := \max_{|R|=r} d_t(R)$
denote the sum of the $r$ largest chip positions at time $t$. For $r=0$, we use the convention $S_0(t):=0$.
Note that $S_r(t)$ can include chips at negative positions.

We will show the following key fact. 
\begin{lemma}[Spreadness lemma]\label{lem:chip-staircase}
Suppose that \(d_0(i)\le0\) for every \(i\in[n]\),
and that the largest chip position at time
$T$ is $D>0$. Then, for every integer $
    1\le r\le \min\{\lfloor D\rfloor,n\}$,
there is a time $t_r\le T$ such that
\[
    \topsum{r}(t_r)\ge r(D-r+1).
\]
\end{lemma}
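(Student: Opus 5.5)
The plan is induction on $r$, going backwards in time. For $r=1$ take $t_1=T$, since $\topsum{1}(T)=D$. For the step from $r$ to $r+1$ (with $r+1\le\min\{\lfloor D\rfloor,n\}$), suppose some time $t_r$ has $\topsum{r}(t_r)\ge r(D-r+1)$. Since $d_0\le0$, we have $\topsum{r}(0)\le0<r(D-r+1)$. So there is a \emph{first} time $s\le t_r$ with $\topsum{r}(s)\ge r(D-r+1)$, and $s\ge1$. I will take $t_{r+1}:=s-1$. The target identity is
\[
(r+1)(D-r)=r(D-r+1)+(D-2r),
\]
so I must show $\topsum{r+1}(s-1)\ge \topsum{r}(s)+(D-2r)$, or something close to it.

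\textbf{Step 1: what raised $\topsum{r}$ at time $s$.} By minimality $\topsum{r}(s)>\topsum{r}(s-1)$. Only the two requested chips move: $i$ goes from $a$ to $a+w$ and $j$ from $b$ to $b-w$, with $a\le b$ and $w\le1$. Hence any increase of $\topsum{r}$ comes from chip $i$. Let $R$ be a top-$r$ set at time $s$. Then $i\in R$, and if the move is contracting or $j$ falls out, $j$ is not needed in $R$. In the case $j\notin R$, the set $R\setminus\{i\}$ has the same positions at times $s-1$ and $s$. Therefore
\[
\topsum{r+1}(s-1)\ \ge\ d_{s-1}\bigl((R\setminus\{i\})\cup\{i,j\}\bigr)=\topsum{r}(s)-w+b,
\]
and it suffices to show $b\ge D-2r+1$. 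We know $b\ge a=d_s(i)-w\ge d_s(i)-1$, so I need a lower bound on the new position of $i$.

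\textbf{Step 2: lower bound on $d_s(i)$.} This is where I expect the main obstacle. Knowing only $\topsum{r}(s)\ge r(D-r+1)$ does not control the smallest element of $R$. I would first compare with $\topsum{r-1}$. Removing $i$ from $R$ gives
\[
d_s(i)\ \ge\ \topsum{r}(s)-\topsum{r-1}(s).
\]
So I need an \emph{upper} bound $\topsum{r-1}(s)\lesssim (r-1)(D-r+2)+\text{slack}$ at this first-crossing time. The natural route is to strengthen the inductive statement to a two-sided \emph{staircase}: choose the times so that $\topsum{k}(t)$ stays below the corresponding threshold for $k<r$ until the level-$r$ crossing happens. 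Concretely, I would take $s$ to be the first time at which \emph{some} level $k\le r$ crosses its threshold $k(D-k+1)$, and run the argument at that level. This gives a descending chain of crossing times along which the increments $\topsum{k}-\topsum{k-1}$ are forced to be about $D-k+1$.

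\textbf{Step 3: the case $j\in R$, and integrality slack.} If $j$ stays in the top $r$ after the move, then $\topsum{r}$ changes by $0$, not by $+w$. So this case cannot be the crossing step and is excluded by minimality. Ties can be broken consistently so that $R$ is well defined. The remaining loss of one unit ($w\le1$, and $b\ge a$ rather than $b\ge a+w$) is absorbed by the "$-r+1$" offset in the threshold. I would check the arithmetic at the end with care; the condition $r+1\le\lfloor D\rfloor$ is what keeps $D-2r$ from becoming too negative for the bookkeeping.
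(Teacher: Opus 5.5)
\textbf{There is a gap at your Step 2.} Your skeleton is the paper's: first-crossing times for the thresholds $A_r:=r(D-r+1)$; the fact that the crossing move must raise $i$, with $i\in R$ and $j\notin R$ for every maximizing $R$ regardless of move type; the bound $S_{r+1}(s-1)\ge S_r(s)-w+b\ge S_r(s)+d_s(i)-2$; and the recursion $A_{r+1}=A_r+(D-2r)$.

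The step you flag as the main obstacle, an upper bound on $S_{r-1}(s)$, is the whole content of the lemma, and you leave it as a plan. The form you aim for, $S_{r-1}(s)\lesssim (r-1)(D-r+2)+\text{slack}$, would not close. The thresholds are exactly tight: $A_r-A_{r-1}=D-2r+2$, equivalently $2A_r-A_{r-1}-2=A_{r+1}$. So you need $S_{r-1}(s)\le A_{r-1}$ with zero additive slack, and any slack would propagate through the levels. For the same reason, the unit losses in Step 3 are not ``absorbed by the $-r+1$ offset''. They are exactly the $-2$ in this identity, so the arithmetic has to be checked, not deferred.

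\textbf{The missing observation.} No strengthening or restructuring is needed, because your own construction already gives the bound.

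Since $t_{r-1}$ itself meets the level-$(r-1)$ threshold, the first time $s_{r-1}\le t_{r-1}$ with $S_{r-1}\ge A_{r-1}$ is the first such time overall. You set $t_r=s_{r-1}-1$, so $s_r\le t_r<s_{r-1}$. By minimality of $s_{r-1}$, this gives $S_{r-1}(s_r)<A_{r-1}$; for $r=1$ use $S_0=A_0=0$.

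Then $d_{s_r}(i)\ge S_r(s_r)-S_{r-1}(s_r)>A_r-A_{r-1}=D-2r+2$. Your Step 1 bound now yields $S_{r+1}(s_r-1)\ge A_r+(D-2r)=A_{r+1}$. So the right inductive statement is simply that the first-crossing times exist and satisfy $s_1>s_2>\cdots>s_r$, which is exactly the paper's induction.

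Your alternative, taking the first time any level $k\le r$ crosses, can also be made rigorous. You must take the smallest level $k$ crossing at that time, so that $S_{k-1}(s)<A_{k-1}$, and then note that $k<r$ would force level $k+1\le r$ to cross at time $s-1$, a contradiction. As written, that alternative is underspecified.

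Finally, the hypothesis $r\le\lfloor D\rfloor$ matters only through $A_r>0$, which you already use to get $s\ge 1$; $D-2r$ being negative is harmless.
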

In particular, setting $r=\lfloor D/2 \rfloor$, this says that there is some time with $\Omega(D)$ chips at positions $\Omega(D)$.
Let us see how Lemma \ref{lem:chip-staircase} directly implies Lemma~\ref{lem:chip-game}.

\begin{proof}[{\bf Proof of Lemma~\ref{lem:chip-game}}]
Let $D:=\max_{t,i}d_t(i)$, and let $T^\star$ be a time at which some chip is at position $D$.
We assume that $D>0$ and is sufficiently large, otherwise the lemma trivially holds as $p\geq 1$.

We first show that \(D\leq n\). Suppose otherwise. Then, applying
\Cref{lem:chip-staircase} with \(r=n\), there is a time \(t_n\le T^\star\)
such that
\[
    \topsum{n}(t_n)\ge n(D-n+1)>0.
\]
But as \(\topsum{n}(t)\) is the sum of all chip positions, and every
move preserves this sum, we have that
    $\topsum{n}(t_n)=\topsum{n}(0)\le0$ (as all chips are initially at nonpositive positions), which is a contradiction.

 We next show \(D=O(\numexpanding^{1/3})\). Let  \(r=\lfloor D/2\rfloor\), and note that $1 \leq r \leq n $ as $D$ is sufficiently large and at most $n$. By Lemma \ref{lem:chip-staircase}, there is a time \(t_r\le T^\star\) for which
\begin{equation}\label{eq:chip-staircase}
    \topsum{r}(t_r)\ge r(D-r+1)\ge D^2/8 >0.
\end{equation}
Let \(R\) be the \(r\) chips with largest positions at time \(t_r\). By Jensen's inequality for $f(x)=x_+^2$, \eqref{eq:chip-staircase} gives
\[
    \Phi_+(t_r)
    \ge \sum_{i\in R}f(d_{t_r}(i))
    \ge r f\left(\frac{1}{r}\sum_{i\in R}d_{t_r}(i)\right)
     =\frac{1}{r} (S_r(t_r)_+)^2 
    \ge \frac{D^3}{32}.
\]
Thus by Lemma~\ref{lem:energy} we have  $D^3/32\le2p$, and hence $D\le4p^{1/3}$.
\end{proof}

\subsection{Proof of the Spreadness Lemma}

We now prove the spreadness lemma.

We first need a claim about the evolution of $S_r(t)$.
It says that
if the top \(r\)-chip sum increases by the move at time $t$, we can infer a lower bound on the top \((r+1)\)-chip sum at time $t-1$.

\begin{lemma}
\label{lem:top-sum}
If $1\leq r<n$ and $S_r(t)>S_r(t-1)$, then
 $S_{r+1}(t-1)\ge 2S_r(t)-S_{r-1}(t)-2.$
\end{lemma}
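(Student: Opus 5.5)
The plan is to rewrite the right-hand side in terms of the $r$-th largest position and then do a short case analysis on the single move at time $t$. Let $x_r(t):=S_r(t)-S_{r-1}(t)$ denote the $r$-th largest chip position at time $t$. Then $2S_r(t)-S_{r-1}(t)=S_r(t)+x_r(t)$, so the goal becomes
\[
    S_{r+1}(t-1)\ \ge\ S_r(t)+x_r(t)-2 .
\]
Let the request at time $t$ involve chips $i,j$ with $a:=d_{t-1}(i)\le b:=d_{t-1}(j)$ and weight $w\in[0,1]$. By \eqref{eq:greedy-update}, chip $i$ moves to $a+w$, chip $j$ moves to $b-w$, and every other chip stays put.

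\textbf{Step 1: the increasing chip is in the top-$r$ set.} Fix a set $R$ of $r$ chips attaining $S_r(t)$. Suppose $i\notin R$. Then every chip of $R$ is at a position at time $t-1$ at least as large as at time $t$, since only chip $i$ moves right. Hence $S_r(t-1)\ge d_{t-1}(R)\ge d_t(R)=S_r(t)$, contradicting the hypothesis. So $i\in R$. Since $R$ consists of the top $r$ chips, this gives $a+w\ge x_r(t)$.

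\textbf{Step 2: the other chip is outside $R$.} Suppose $j\in R$. Then $d_{t-1}(R)=d_t(R)-w+w=S_r(t)$, so again $S_r(t-1)\ge S_r(t)$, a contradiction. Hence $j\notin R$, and $R\cup\{j\}$ is a set of $r+1$ chips; this uses $r<n$.

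\textbf{Step 3: evaluate $R\cup\{j\}$ at time $t-1$.} Using $b\ge a\ge x_r(t)-w$, we get
\[
    S_{r+1}(t-1)\ \ge\ d_{t-1}(R)+b\ =\ \bigl(S_r(t)-w\bigr)+b\ \ge\ S_r(t)+x_r(t)-2w\ \ge\ 2S_r(t)-S_{r-1}(t)-2,
\]
where the last inequality uses $w\le 1$.

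There is no real obstacle here. The only point requiring care is the choice of the maximizing set $R$ at time $t$, and the observation that both "bad" configurations, $i\notin R$ or $j\in R$, contradict the strict increase of $S_r$. Ties among chip positions are harmless because any maximizing set works. The case $r=1$ is covered by the convention $S_0(t)=0$, which makes $x_1(t)=S_1(t)$.
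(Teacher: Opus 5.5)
Your proof is correct and is essentially the paper's argument: you fix a maximizing $r$-set $R$ at time $t$, use the strict increase of $S_r$ to force the right-moving chip into $R$ and the left-moving chip out of it, bound the right-moving chip's new position below by $S_r(t)-S_{r-1}(t)$, and evaluate $R\cup\{j\}$ at time $t-1$. The only cosmetic difference is that you compute $d_{t-1}(R\cup\{j\})=S_r(t)-w+b$ directly, whereas the paper notes that this set's total is unchanged by the move and bounds $d_t(j)\ge d_t(i)-2$.
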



\begin{proof}
Let $u,v$ be the moved chips at time $t$, and let $d_{t-1}(u)=a\le b=d_{t-1}(v)$. So $u$
moves right and $v$ moves left. Let $R$ be some $r$-set attaining $d_t(R) = S_r(t)$ at time $t$. Since,
\[
    d_t(R)=S_r(t)>S_r(t-1)\ge d_{t-1}(R),
\]
it must be that $u\in R$ and $v\notin R$. 
As  $S_{r-1}(t) 
\geq d_t(R\setminus\{u\}) = d_t(R)-d_t(u)$, we have $d_t(u) \ge S_r(t)-S_{r-1}(t)$, and thus
\[
    d_t(v)\ge S_r(t)-S_{r-1}(t)-2,
\]
as $d_t(u)-d_t(v)=a-b+2w\le2$.

Consider the $(r+1)$-set $R\cup\{v\}$. As it contains both $u,v$, its total is unchanged by the move. Thus
\[
\begin{aligned}
    S_{r+1}(t-1)
    &\ge d_{t-1}(R\cup\{v\})
     =d_t(R\cup\{v\})
    \ge 2S_r(t)-S_{r-1}(t)-2.
\end{aligned} \qedhere
\]
\end{proof}

We can now prove Lemma \ref{lem:chip-staircase}.

\begin{proof}[{\bf Proof of \Cref{lem:chip-staircase}}]
Let $m:=\min\{\lfloor D\rfloor,n\}$. 
For \(0\le r\le m\), define the threshold $A_r:=r(D-r+1)$. For $r\ge1$, let $t_r$ be the
{\em first} time at which $S_r(t_r)\ge A_r$, provided such a time exists. 

We prove by induction on $r \geq 1$ that
all these times exist and that
$ T\ge t_1>t_2>\cdots>t_m$.

The base case of $r=1$ is immediate as
\(\topsum{1}(T)=D=A_1\), and thus \(t_1\) exists and satisfies \(t_1\le T\).

Suppose that $1\le r<m$ and that
$t_1>\cdots>t_r$ have been defined. As $A_r>0$ for $r \leq D$ and as $S_r(0)\le 0$ (by initial chip positions), we have that $t_r\ge1$.
Moreover, by minimality of $t_r$, we have 
$S_r(t_r-1)<A_r\le S_r(t_r)$.
Thus $S_r$ increases at time $t_r$, and by Lemma~\ref{lem:top-sum} we have 
\begin{equation}
\label{eq:rel1}
S_{r+1}(t_r-1)\ge2S_r(t_r)-S_{r-1}(t_r)-2.
\end{equation}
As 
$t_r<t_{r-1}$, the
minimality of $t_{r-1}$ again gives $S_{r-1}(t_r)<A_{r-1}$ (for $r=1$, we use $S_0(t)=A_0=0$). Thus by \eqref{eq:rel1},
\[
\begin{aligned}
    S_{r+1}(t_r-1)
    &\ge2A_r-A_{r-1}-2  =(r+1)(D-r)=A_{r+1}.
\end{aligned}
\]
Consequently $t_{r+1}$ exists and $t_{r+1}\le t_r-1<t_r$, which completes the induction.
\end{proof}

\section{Stochastic Arrivals from Regular Graphs}
\label{sec:stochastic-arrivals}
We now consider the stochastic setting and prove Theorem \ref{thm:uniform-rand-reg}.

Recall that here we are given a
\(\Delta\)-regular \emph{base graph} \(G=([n],E)\), and at each time \(t\), the request is an edge
\(e_t=(u_t,v_t)\), sampled independently and uniformly from \(E\).
For ease of exposition, we assume that
each request has weight \(1\), and we use the words request and edge interchangeably.
However, 
the bound in Theorem \ref{thm:uniform-rand-reg} holds for arbitrary weights \(w_t\in[0,1]\), even when $w_t$ is chosen adaptively based on the
 previously sampled edges \(e_1,\ldots,e_{t-1}\), as we will reduce the analysis of the stochastic setting to a suitable deterministic instance and apply Theorem \ref{thm:greedy}.

\medskip
{\bf Sampled Graph.}
As discussed in Section \ref{sec:overview}, we can assume that $T=cn$, and we set $c=1/16$.
The key object in the proof of \Cref{thm:uniform-rand-reg} is the
\emph{sampled multigraph} \(\Gsamp\), which has vertex set \([n]\), and
edge-multiset 
\(e_1,\ldots,e_{cn}\) consisting of edges sampled independently from $G$ (with multiplicity).

The proof has two ingredients: First, we show that
\(\Gsamp\) has connected components of size $O(\log n)$, and that each component is close to a tree. As \Greedy{} works
 on each component of \(\Gsamp\)  (formed in hindsight) separately,  we can apply the adversarial bound in Theorem \ref{thm:greedy} to the request sequence within each component. This already gives the $O(\log^{1/3} n)$ bound in Theorem \ref{thm:uniform-rand-reg}. Second, to prove the stronger bound \eqref{eq:stoch} we prove a more refined adversarial bound for \Greedy{} when the
multigraph formed by the requests is tree-like. 

We now
state these two ingredients formally, and show they imply Theorem \ref{thm:uniform-rand-reg}.

\begin{lemma}\label{lem:rand-graph-lem}
With probability $1-n^{-\Omega(1)}$,
every connected component \(C\) of \(\Gsamp\) satisfies:

 \,\,\,\,\, 1.  {\em Few vertices and edges:}  $C$ has
    $|V(C)| = O(\log n)$ vertices and  $|E(C)|  = O(\log n)$ edges.
    
  \,\,\,\,\, 2. {\em ``Tree-like'' components:} 
    For $\Delta = \Omega(\log^3 n)$, we further have
    $
    |E(C)| \leq |V(C)| + O(\log n/\log \Delta).$
\end{lemma}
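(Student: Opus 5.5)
Both parts reduce to a first-moment (union-bound) count over connected edge-subsets of the sampled multigraph \(\Gsamp\), where \(cn\) edges are drawn i.i.d.\ from the \(\Delta\)-regular graph \(G\) with \(c=1/16\). The basic estimate is the following. Fix a connected multigraph \(H\) on vertices \([n]\) whose \(k\) edges are all edges of \(G\), together with an injective assignment of time slots to its edges. The probability that the slots realize \(H\) is \((1/|E|)^k=(2/(n\Delta))^k\), and there are at most \((cn)^k\) ways to choose the slots. So a fixed \(H\) with \(k\) edges appears in \(\Gsamp\) with probability at most \((2c/\Delta)^k\).

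\textbf{Part 1 (few vertices and edges).} If a component has at least \(k\) edges, it contains a connected sub-multigraph with exactly \(k\) edges; take, for instance, a spanning tree grown edge by edge, adding extra edges if needed. So it suffices to bound the expected number of connected \(k\)-edge subgraphs of \(\Gsamp\). I first count connected subgraphs of \(G\) with \(k\) edges, treated as trees plus extra edges, via rooted exploration sequences. Choose a root (\(n\) ways). Encode the subgraph by a walk or DFS sequence of length at most \(2k\) in which each step moves to one of at most \(\Delta\) neighbors. This gives at most \(n\cdot 4^k\Delta^k\) labelled shapes, including time-slot orderings up to a \(k!\) factor that the \(1/k!\) from unordered slots cancels. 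The expected count is then at most \(n\,(8c)^k = n\,2^{-k}\) for \(c=1/16\). Taking \(k=C\log n\) makes this \(n^{-\Omega(1)}\). Since a component with \(O(\log n)\) edges has \(O(\log n)\) vertices, part 1 follows. I expect the constant bookkeeping in this encoding (counting a \(k\)-edge connected multigraph as a walk with at most \(\Delta\) choices per new edge) to be routine.

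\textbf{Part 2 (tree-like components).} On the high-probability event of part 1, every component has \(m=O(\log n)\) edges. Suppose some component has surplus \(|E(C)|-|V(C)|\ge s\), where \(s=K\log n/\log\Delta\). Then it contains a connected subgraph \(H\) with \(k\le m=O(\log n)\) edges and surplus \(s\), namely a spanning tree plus \(s+1\) extra edges. The key point is that in the count each tree edge leaves \(\Delta\) choices for the new endpoint, while a surplus edge joins two already-chosen vertices, so it costs only about \(k^2\) choices instead of \(\Delta\). The number of such \(H\) is therefore at most \(n\,4^k\Delta^{k-s-1}k^{2(s+1)}\), and each appears with probability at most \((2c/\Delta)^k\). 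The expectation is at most
\[
n\,(8c)^k\left(\frac{k^2}{\Delta}\right)^{s+1}.
\]
With \(k=O(\log n)\) and \(\Delta=\Omega(\log^3 n)\), we get \(k^2/\Delta\le \Delta^{-1/3}\) for suitable constants. Hence the bound is \(n\,\Delta^{-(s+1)/3}\), which equals \(n^{-\Omega(1)}\) once \(s\ge K\log n/\log\Delta\) for a large constant \(K\). A union bound over \(k\le C\log n\) adds only a \(\log n\) factor.

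\textbf{Main obstacle.} The step I expect to be delicate is this counting with surplus edges. It must correctly charge each non-tree edge only \(k^2\) choices rather than \(\Delta\), handle multi-edges (a repeated sample of the same edge is a surplus edge with \(O(k)\) choices), and extract a subgraph of size \(O(\log n)\) containing exactly the right surplus. To get the structural extraction cleanly I would first fix a BFS spanning tree of the component and then add surplus edges one at a time. This mirrors the witness-tree argument of Kenthapadi and Panigrahy.
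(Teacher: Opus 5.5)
Your proposal is correct and follows essentially the same witness-tree first-moment argument as the paper: an injective encoding of connected $k$-edge multisets as labelled trees (at most $n4^k\Delta^k$ of them) against the occurrence bound $(2c/\Delta)^k$, giving $n2^{-k}$, and for part 2 charging each surplus edge $k^2$ choices instead of $\Delta$ to pick up the extra factor $(k^2/\Delta)^{s+1}$. One small fix: $k^2/\Delta\le\Delta^{-1/3}$ only holds if the constant hidden in $\Delta=\Omega(\log^3 n)$ is large compared to $C^3$, whereas the paper uses $k^2/(8\Delta)\le\Delta^{-1/4}$, which holds for all large $n$ whatever that constant is. Also, your $k!$ remark is unnecessary, since counting multisets against ordered slot assignments $(cn)^k$ is already the right product.
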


In particular, part 2 of Lemma \ref{lem:rand-graph-lem} says 
that each component has only
\(O(\log n/\log\Delta)\) edges beyond a spanning tree. The following lemma gives a refined bound for precisely such request
sequences.

\begin{restatable}[\Greedy{} on  ``tree-like'' requests]
    {lemma}{treelikegreedy}\label{lem:graph-ub}
Let \(H\) be a connected multigraph on \(m\) vertices with \(T\) edges, and
let \(p:=T-(m-1)\) denote the number of surplus edges in $H$
beyond a spanning tree. Let \(\rho_1,\ldots,\rho_T\) be any ordering of the
edges of \(H\), viewed as carpooling requests. Then, for each \(1\leq t\leq T\),
the maximum discrepancy of \Greedy{} after processing \(\rho_1,\ldots,\rho_t\) is
\[
    O\bigl(p^{1/3} + \log m \bigr).
\]
\end{restatable}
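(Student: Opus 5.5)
The plan is to reduce Lemma~\ref{lem:graph-ub} to the machinery behind Lemma~\ref{lem:chip-game}. The idea is to work in a modified chip instance in which only the $p$ \emph{internal} (surplus) edges can act as expanding moves, while the $m-1$ \emph{merging} edges are absorbed by a logarithmic shift. By symmetry it suffices to bound the largest positive discrepancy. For a time $t$ and chip $i$, let $C_t(i)$ be the connected component containing $i$ in the multigraph formed by $\rho_1,\ldots,\rho_t$. Every vertex starts at $0$, and internal edges preserve component sums, so each component has total discrepancy $0$ at all times. I would define shifted positions
\[
x_t(i) := d_t(i) - \alpha\log_2\bigl(2m/|C_t(i)|\bigr),
\]
for a suitable constant $\alpha$. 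Then $x_0(i)\le 0$ for all $i$, and $x_t(i)\ge d_t(i)-O(\log m)$. Hence a bound $x_t(i)=O(p^{1/3})$ gives the lemma. When needed, I would add one extra vertex $z$ whose position is set to keep the total sum of the $x$'s at most $0$. This preserves the fact used in Lemma~\ref{lem:chip-game} that $S_{n}$ is nonpositive.

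Next I would recheck the two ingredients of Lemma~\ref{lem:chip-game} for the $x$-process. These are the energy bound of Lemma~\ref{lem:energy} and the spreadness argument of Lemma~\ref{lem:chip-staircase} via Lemma~\ref{lem:top-sum}. An internal edge does not change component sizes, so in the $x$-coordinates it is exactly a Greedy move. Both endpoints receive the same shift, so their relative order is unchanged. Such a move is therefore either contracting or expanding, and the expanding ones are at most the number of internal edges among $\rho_1,\ldots,\rho_t$, i.e.\ at most $p$. A merging edge joins components of sizes $s_1\le s_2$. Every chip in the smaller component has its shift lowered by at least $\alpha$, since the size at least doubles. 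Chips in the larger component are only lowered, by a possibly tiny amount. Pure downward shifts can only decrease $\Phi_+$ and every top sum $S_r$. Consequently the backward induction in Lemma~\ref{lem:chip-staircase} still works: at a first-crossing time $t_r$ the increase of $S_r$ is caused by the edge move, and Lemma~\ref{lem:top-sum} applies to the pre-shift configuration, which dominates the post-shift one coordinatewise.

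The main obstacle is the merging edge itself. On a merge, Greedy moves the lower endpoint up by $1$ and the upper endpoint down by $1$. If the lower endpoint lies in the \emph{larger} component, its shift barely changes, and the move may be genuinely expanding in $x$-coordinates. Counting these would cost $m$ rather than $\log m$. My first attempt would be a case split by which side is smaller. If the lower endpoint $u$ lies in the smaller component, its net change is at most $1-\alpha\le 0$ for $\alpha\ge 1$, so the move is dominated by a pure downward shift and is harmless. If $u$ lies in the larger component, I would charge the move to the extra vertex $z$. I would reinterpret it as a contracting move between $u$ and $z$, placing $z$ far below all chips, paired with a downward shift of the other endpoint. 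The goal is to show that Lemma~\ref{lem:top-sum} and the energy bound remain valid with $z$ present. If this charging fails, I would fall back on a weighted variant of the shift, for example using $\log$ of $\sum_{j\in C}\max(d(j),0)+|C|$, so that the larger side also drops by a constant on every merge. With either device in place, Lemma~\ref{lem:chip-staircase} and Lemma~\ref{lem:energy} give $\max_i x_t(i)\le 4p^{1/3}$. Undoing the shift then yields the claimed bound $O(p^{1/3}+\log m)$.
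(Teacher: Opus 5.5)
There is a genuine gap, and it sits exactly at the step you flag as the main obstacle.

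\textbf{Sign of the shift.} With $x_t(i)=d_t(i)-\alpha\log_2(2m/|C_t(i)|)$, the subtracted term shrinks as components grow. So merges move every chip of the new component \emph{up}, not down, and upward shifts can increase $\Phi_+$ and every $S_r$. You want $x_t(i)=d_t(i)-\alpha\log_2|C_t(i)|$ (the paper takes $\alpha=2$). This still gives $x_0\le 0$ and $d_t\le x_t+\alpha\log_2 m$.

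\textbf{The merge case with $u$ in the larger component.} Suppose the lower endpoint $u$ lies in the larger component $A$, with $a=|A|>b=|B|$. Then $u$ must move \emph{up} by $w=1-\alpha\log_2((a+b)/a)>0$. Your charging to a dummy $z$ placed far below cannot do this: a \Greedy{} move between $u$ and a chip below it always pushes $u$ \emph{down}. Placing $z$ far above instead would break $x_0\le 0$ and blow up $S_1$ and $\Phi_+$.

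The weighted fallback also fails. For any weight that is additive under merging, the heavier side's growth factor $(W_A+W_B)/W_A$ can be arbitrarily close to $1$, and $u$ may lie on that side. Worse, a data-dependent weight such as $\sum_{j\in C}d(j)_++|C|$ changes under internal edges, possibly upward. That destroys the property that internal edges are exact \Greedy{} moves in $x$-coordinates.

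Your claim that this merge ``may be genuinely expanding'' is an artifact of your decomposition: a full unit move followed by a shift. That is too lossy. With $b<a<2^{1/\alpha}b$ and $d_{t-1}(u)=d_{t-1}(v)$, the unit move is indeed expanding.

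\textbf{The missing idea.} Use the \emph{net} displacement $w$ of $u$. The shift difference between the two components itself supplies the room:
\[
x_{t-1}(v)-x_{t-1}(u)=d_{t-1}(v)-d_{t-1}(u)+\alpha\log_2(a/b)\ge \alpha\log_2(a/b).
\]
Moreover,
\[
\alpha\log_2(a/b)+\alpha\log_2((a+b)/a)=\alpha\log_2((a+b)/b)>\alpha\ge 1
\]
because $a>b$. Hence $x_{t-1}(v)-x_{t-1}(u)>w$, so moving $u$ up by $w$ and $v$ down by $w$ is a \emph{contracting} \Greedy{} move on the pair $(u,v)$.

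After that, $v$ and all other chips of $A\cup B$ need only downward adjustments. The paper realizes these as contracting moves against a dummy chip far to the \emph{left}. So every merge is simulated by contracting moves, only the $p$ internal requests can be expanding, and Lemma~\ref{lem:chip-game} gives $x_t\le 4p^{1/3}$, hence $d_t\le 4p^{1/3}+2\log_2 m$.

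If you instead keep downward shifts as separate operations, insert the post-move, pre-shift configuration as its own time step. Otherwise the minimality of $t_{r-1}$ in Lemma~\ref{lem:chip-staircase} does not bound $S_{r-1}$ at that intermediate configuration, which is the quantity Lemma~\ref{lem:top-sum} needs.
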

\begin{proof}[Proof of \Cref{thm:uniform-rand-reg}]
Condition on the event in \Cref{lem:rand-graph-lem}, and fix a component
\(C\) of \(\Gsamp\). Now \Greedy{} works separately in each component, and the
request sequence in \(C\) is some arbitrary ordering of the sampled edges in $C$.
 As \(|E(C)|=O(\log n)\) by Lemma \ref{lem:rand-graph-lem}, \Cref{thm:greedy} directly gives the
    \(O(\log^{1/3} n)\) bound in Theorem \ref{thm:uniform-rand-reg}.

We now show the second bound in Theorem \ref{thm:uniform-rand-reg} assuming \(\Delta=\Omega(\log^3 n)\). Fix some component $C$ of \(\Gsamp\) and set
    \(m:=|V(C)|\) and \(p:=|E(C)|-|V(C)|+1\).  Conditioning on the event in Lemma \ref{lem:rand-graph-lem}, we have \(m=O(\log n)\) and
    \(p=O(\log n/\log\Delta)\).
    Hence, applying \Cref{lem:graph-ub} to the request sequence in $C$ gives the claimed discrepancy $O(p^{1/3} + \log m)=
        O((\log n/\log\Delta)^{1/3}
            +\log\log n)$.

Since the event in \Cref{lem:rand-graph-lem} holds with probability
\(1-n^{-\Omega(1)}\) and $C$ is arbitrary, with high probability, the claimed discrepancy bound holds  for every $i \in [n]$.
\end{proof}

The proof of \Cref{lem:rand-graph-lem} is similar to the witness-tree argument in~\cite{kenthapadi2005balancedallocationgraphs}.
But the
formulation we need here is slightly different, we defer the details to~\Cref{subsec:structure-sampled} for completeness. We prove \Cref{lem:graph-ub} next.

\subsection{\Greedy{} on Tree-Like Request Sequences}
\label{subsec:treelike}

\begin{proof}[Proof of \Cref{lem:graph-ub}]
For $0\le t\le T$, let $H_t$ be the multigraph formed by the first $t$ requests, and let $c_t(u)$ denote the size of the component containing $u$ in $H_t$. We call a request $\rho_t=(u,v)$ \emph{merging} if $u$ and $v$ lie in different components of $H_{t-1}$, and \emph{internal} otherwise. Since $H$ is connected, exactly $m-1$ requests are merging, and hence exactly $p$ are internal.

The idea is to absorb the effect of merging requests into the growth of the component sizes, so that only the $p$ internal requests can contribute expanding moves. Define
\[
    x_t(u):=d_t(u)-2\log_2 c_t(u),
    \qquad
    y_t(u):=-d_t(u)-2\log_2 c_t(u).
\]
We show that   $x_t(u)\le 4p^{1/3}$
for every $u$ and $t$. The same argument applied to $y_t$ gives the analogous bound. Since $c_t(u)\le m$, it then follows that
\[
    |d_t(u)|\le 4p^{1/3}+2\log_2 m.
\]
To bound $x_t(u)$, consider an auxiliary chip game with one chip for each vertex $u$ and one dummy chip $z$, initially placed sufficiently far to the left of the origin. All the vertex chips start at the origin.
The idea is to realize the evolution of $x_t$  so that every internal request in $H_t$ is simulated by a single chip move in the auxiliary game, while every merging request is simulated using only contracting moves. The auxiliary game then has at most $p$ expanding moves, and Lemma \ref{lem:chip-game} gives the desired bound.

Suppose first that $\rho_t=(u,v)$ is internal, and label its endpoints so that
$d_{t-1}(u)\le d_{t-1}(v)$. Since $u$ and $v$ are in the same component,
$c_{t-1}(u)=c_{t-1}(v)$, and hence
$x_{t-1}(u)\le x_{t-1}(v)$. Moreover, the component sizes do not change, so
\[
    x_t(u)=x_{t-1}(u)+1,
    \qquad
    x_t(v)=x_{t-1}(v)-1.
\]
Thus the update is exactly one chip move of weight $1$.

Now suppose that the request $\rho_t=(u,v)$ merges components $A$ and $B$ containing $u$ and $v$ respectively. Let $a=|A|$ and $b=|B|$, and suppose $d_{t-1}(u)\le d_{t-1}(v)$.
As $d_{t-1}(u)\le d_{t-1}(v)$, we have $d_t(u) = d_{t-1}(u)+1$ and $d_t(v)= d_{t-1}(v)-1$, and thus
\[
    x_t(u)=x_{t-1}(u)+1-\alpha,
    \qquad
    x_t(v)=x_{t-1}(v)-1-\beta,
\]
where 
    $\alpha:=2\log_2((a+b)/a)$ and 
    $\beta:=2\log_2((a+b)/b)$. Note that both $\alpha,\beta > 0$.

Moreover, every chip in $A\setminus\{u\}$ decreases by $\alpha$, and every chip in $B\setminus\{v\}$ decreases by $\beta$.

The dummy chip $z$ serves simply as a sink for these decreases. By placing it sufficiently far to the left, we can decrease any vertex chip by an arbitrary amount using contracting moves with $z$, splitting the decrease into moves of weight at most $1$. Thus all updates for vertices other than $u$ and $v$ are immediate.

It remains to handle $u$ and $v$. If $\alpha\ge1$, both chips move left (note that $v$ always moves to the left as $\beta>0$), and we again use $z$. Suppose therefore that $\alpha<1$, and let
$ w:=1-\alpha\in(0,1)$.
So chip $u$ moves right by $w$. 

A key observation is that this move is contracting. To this end we will show that $x_{t-1}(v)-x_{t-1}(u)\geq w$. Indeed,
\[
\begin{aligned}
    x_{t-1}(v)-x_{t-1}(u)
    =
    d_{t-1}(v)-d_{t-1}(u)
      +2\log_2 (a/b) \ge
    2\log_2 (a/b)
    =\beta-\alpha.
\end{aligned}
\]
Now $\alpha<1$ implies that $(a+b)/a \leq \sqrt{2}$ and hence that $a>b$. But this implies that $\beta>2$, and thus $\beta -\alpha \geq 2 -\alpha  >1-\alpha=w$.

We first make a weight-$w$ move on the chips $u,v$, which is contracting. This moves $u$ to its required position $x_t(u)$  and decreases $v$ by $w$. The remaining decrease required at $v$ is $
    1+\beta-w=\alpha+\beta>0$,
which can again be carried out using $z$.

Thus every merging request can be simulated entirely by contracting moves, whereas each internal request contributes at most one expanding move. Since there are exactly $p$ internal requests, Lemma \ref{lem:chip-game} gives 
    $x_t(u)\le 4p^{1/3}$
for every $u$ and $t$. Applying the same argument to $y_t$ completes the proof.
\end{proof}

\subsection{Lower Bound for Stochastic Arrivals}

We now prove Theorem \ref{thm:cube-root-lb} and show a lower bound for the cube-root bound in \eqref{eq:stoch}. 

\begin{theorem}\label{thm:lower-bound}
Suppose that $\Delta=\Omega(\log n)$. There exists a $\Delta$-regular graph
$G$ on $n$ vertices such that for $n$ i.i.d. edge requests from $G$,
every online algorithm has, with
probability $1-o(1)$,
\[
    \max_{t\le n}\|d_t\|_\infty
    =\Omega\big((\log n/\log\Delta)^{1/3}\big).
\]
\end{theorem}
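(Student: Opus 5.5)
We take $G$ to be a disjoint union of cliques and run the pigeonhole adversary of Ajtai et al.\ (Section~\ref{sec:overview}) inside a single clique. Random arrivals will, with small but non-negligible probability, reproduce that adversary's choices in some clique, and there are many independent cliques.

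\textbf{Construction and parameters.} Let $G$ be a disjoint union of $N\approx n/(\Delta+1)$ copies of $K_{\Delta+1}$. When $(\Delta+1)\nmid n$, we replace one or two blocks by a $\Delta$-regular graph on between $\Delta+1$ and $2\Delta+1$ vertices; these blocks play no role below. Each clique has about $\Delta^2/2$ edges, and $G$ has $n\Delta/2$ edges. Set
\[
k:=\Big\lfloor \frac{\log n}{C\log \Delta}\Big\rfloor, \qquad h:=\lfloor k^{1/3}\rfloor ,
\]
for a large constant $C$. If $\Delta\ge n^{1/C}$ then the target bound is $O(1)$, and the claim is trivial because some edge arrives, giving discrepancy at least $1$. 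So we may assume $k$ is large.

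In each clique $Q$, fix a set $S_Q$ of $h$ vertices. The samples landing in $Q$ form a subsequence. Since $n$ samples are drawn and each lands in $Q$ with probability about $2/n$, the number of them is about $\mathrm{Bin}(n,2/n)$, which is at least $k$ with probability at least $\Delta^{-O(k)}$. Conditioned on the counts $(n_Q)_Q$, the within-clique sequences are independent across cliques, and each is i.i.d.\ uniform over the edges of $Q$.

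\textbf{Good event in one clique.} Call clique $Q$ \emph{good} if its first $k$ arrivals are chosen as follows. Before each of these arrivals, if every vertex of $S_Q$ has discrepancy in $(-h/2,h/2)$, then by pigeonhole some two vertices $i,j\in S_Q$ share a discrepancy value. Discrepancies are integers because weights are $1$, and we take $h$ a little larger than the number of available values, adjusting constants. Given the current history, which fixes the algorithm's state, let $(i,j)$ be such a pair chosen by a fixed rule; the arrival is required to be exactly $(i,j)$. If instead some vertex already has discrepancy at least $h/2$, we are done, and the arrival may be arbitrary. This conditional probability is $1/|E(Q)|\ge 1/\Delta^2$ at each step, even against a randomized online algorithm, since the algorithm cannot see the future sample. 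Hence
\[
\Pr[Q \text{ good}\mid n_Q\ge k]\ \ge\ \Delta^{-2k}.
\]

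On a good clique, let $\Phi_S:=\sum_{i\in S_Q}d(i)^2\ge 0$. Only arrivals inside $Q$ change vertices of $S_Q$, and these $k$ arrivals are consecutive within $Q$. Each such arrival has equal endpoint discrepancies, so by \eqref{eq:lbpot} it raises $\Phi_S$ by exactly $2$. Therefore either some discrepancy reached $h/2$, or at the end $\Phi_S\ge 2k$, which forces some vertex of $S_Q$ to have $|d|\ge\sqrt{2k/h}=\Omega(h)$. In either case the discrepancy is $\Omega(k^{1/3})=\Omega((\log n/\log\Delta)^{1/3})$.

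\textbf{Many independent trials.} Each clique is good with probability at least $\Delta^{-3k}\ge n^{-3/C}$. We use the standard Poissonization or negative association of the counts, or simply condition on the vector of counts and note that each $n_Q\ge k$ event has the stated probability. The expected number of good cliques is then at least $N n^{-3/C}\ge n^{1-3/C}/(\Delta+1)$. Conditioned on the counts, the good events are independent across cliques, so if this quantity tends to infinity, some clique is good with probability $1-o(1)$. That holds whenever $\Delta\le n^{1/2}$, say. For larger $\Delta$ the bound is $O(1)$ and hence trivial.

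The assumption $\Delta=\Omega(\log n)$ is used only to keep $k\le\Delta$ and to make $h$ vertices and $k$ arrivals fit comfortably inside a single clique.

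\textbf{Main obstacle.} The delicate step is making the ``adaptive pair'' probability argument rigorous against an arbitrary, possibly randomized, online algorithm while the count variables $n_Q$ are dependent. I would first handle this by conditioning on the counts and on the interleaving order, and then applying a martingale-style product bound to the within-clique samples. The second point to check is the lower tail on the counts: we need $\Pr[n_Q\ge k]$ to be not much smaller than $\Delta^{-k}$, so that it is absorbed into the $\Delta^{-3k}$ estimate.
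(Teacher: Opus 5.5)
Your proposal is correct and follows essentially the same route as the paper: a disjoint union of copies of $K_{\Delta+1}$, the Ajtai et al.\ pigeonhole adversary on $h\approx(\log n/\log\Delta)^{1/3}$ vertices of one clique, mimicked by the random arrivals with probability $\Delta^{-O(h^3)}=n^{-O(1/C)}$, and independence across cliques obtained by conditioning on the counts and arrival order and then revealing edge identities chronologically (the paper Poissonizes to make the counts independent, where you use conditioning or negative association). One small slip: a sample lands in a given clique with probability $(\Delta+1)/n$, not $2/n$, so $n_Q$ has mean $\Delta+1\ge k$ and $\Pr[n_Q\ge k]\ge 1/2$, which settles your second ``point to check'' immediately.
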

The proof is an adaptation of the $\Omega(\log^{1/3}T)$ randomized lower bound
of \cite{AjtaiAspnesNaorRabaniSchulmanWaarts98} in Section \ref{sec:overview}.

\begin{proof}[Proof of \Cref{thm:lower-bound}]
Let $G$ be the $\Delta$-regular graph consisting of
$M:=n/(\Delta+1)$ disjoint cliques $C_1, \ldots, C_M$, each of size $\Delta+1$. We will show an $\Omega(h)$ lower bound, where
$h:=\left\lfloor(\log n/(8\log\Delta))^{1/3}\right\rfloor$.

We can assume that $\Delta< n^{1/64}$, otherwise the lower bound holds trivially. Thus $h\ge 2$, and the number of cliques is
$M=\Omega(n^{63/64})$.
We will show that each clique $C_i$ contains a vertex of discrepancy
$\Omega(h)$ with probability $\Omega(n^{-1/4})$, and that these events are independent for different cliques. As there are
$\Omega(n^{63/64})$ cliques, this will give the result.

It is convenient to work with a Poissonized edge arrival process rather than assume that the number of arrivals is exactly $n$. Assign to each edge $e\in E(G)$ an independent Poisson clock of rate $1/\Delta$, and run the process for one unit of time. Since $|E(G)|=n\Delta/2$, the total number $N$ of edges that arrive satisfies $N\sim\operatorname{Poi}(n/2)$. Thus, this yields a sequence of $N$ edges, where each is sampled independently and uniformly from $E(G)$. Finally, $\Pr[N>n]=e^{-\Omega(n)}$, so $N\le n$ with high probability. Therefore, a lower bound in the Poissonized model translates to a lower bound in the model with exactly $n$ arrivals.

Fix a clique \(C_i\) and note that the number of arrivals in \(C_i\) is Poisson with mean \(\frac{1}{\Delta}\binom{\Delta+1}{2}=(\Delta+1)/2\). For $L = h^3 $ which is $o(\Delta)$, by standard concentration bounds, the probability that $C_i$ receives at least $L$ arrivals is at least $1/2$.

Now, fix an arbitrary set \(S_i\subseteq C_i\) of \(h\) vertices and consider the first $L$ arrivals in $C_i$.  If at any point some vertex of \(S_i\) has absolute discrepancy at least \(h/4\), we are done. Otherwise, some two vertices have the same discrepancy. For each $k\le L$, before the $k$-th edge in $C_i$ is revealed, prescribe an edge $e_{i,k}\subseteq S_i$ joining two equal-discrepancy vertices, choosing lexicographically; if the lower bound has already occurred, choose arbitrarily. Let $B_i$ be the event that $C_i$ receives at least $L$ arrivals and each of its first $L$ arrivals equals its prescribed edge.

If the event \(B_i\) occurs, then either a discrepancy of \(\Omega(h)\) has already occurred, or every one of the first \(L\) arrivals increases the potential
    $\Phi_i:=\sum_{v\in S_i} d(v)^2$
by \(2\). In the latter case, after \(L\) arrivals \(\Phi_i=2L=2h^3\), and some vertex has discrepancy \(\Omega(h)\).

As each arrival in \(C_i\) is uniform among its (at most $\Delta^2$) edges, 
$ \Pr[B_i]
    \ge (1/2) \Delta^{-2h^3}
    \ge(1/2) n^{-1/4}$.

We now show that the events $B_i$ are independent. Condition on all Poisson counts and arrival times. Reveal the edge identities in chronological order. Each $e_{i,k}$ is determined before the $k$-th edge in $C_i$ is revealed, while that edge is independent and uniform in $E(C_i)$. Hence, by induction over the reveal order, the events that the $k$-th edge in $C_i$ equals $e_{i,k}$ are mutually independent over all $i,k$. Since the Poisson counts are also independent across cliques, so are the events $B_i$.

As there are \(M=\Omega(n^{63/64})\) cliques and the corresponding events $B_1,\ldots,B_M$ are independent, some \(B_i\) occurs with probability at least $1-(1-n^{-1/4}/2)^M = 1-o(1)$.
\end{proof}

\subsection{Missing Proof of \Cref{lem:rand-graph-lem}}
\label{subsec:structure-sampled}

The proof uses a ``witness-tree'' argument, which proceeds as follows. To bound
the number of edges in a connected component, we enumerate the multisets of
\(r\) edges of \(G\) that form a connected subgraph. For each such multiset,
we bound the probability that it occurs in \(\Gsamp\). A union bound then shows
that, with high probability, every connected multiset of edges occurring in
\(\Gsamp\) has size \(O(\log n)\). When \(\Delta\) is large, a different bound
on the number of connected multisets shows that every component is nearly a
tree.

\begin{proof}[Proof of \Cref{lem:rand-graph-lem}]
Set \(c:=16\), \(T:=n/c\), and \(M:=|E(G)|=n\Delta/2\). The \(T\) edges of
\(\Gsamp\) are sampled independently and uniformly with replacement from \(E(G)\).

\paragraph{Components have \(O(\log n)\) vertices and edges.}
We first show that, with high probability, every component of \(\Gsamp\) has
\(O(\log n)\) edges. The vertex bound then follows.

For a fixed \(r\ge1\), let \(C\) be an arbitrary multigraph formed by a
multiset of \(r\) edges of \(G\) whose underlying simple graph is connected.
To count all such \(C\), we construct an injective map
\(C\mapsto\mathtt{Witness}(C)\) to
vertex-labeled ``witness trees'' with \(r\) edges. It therefore suffices to
count the possible witness trees.

To construct \(\mathtt{Witness}(C)\), we choose a spanning tree
\(C_{\mathrm{tree}}\) of \(C\). For each remaining edge
\(\{u,v\}\in C\setminus C_{\mathrm{tree}}\), attach a new leaf labeled \(v\)
to the vertex \(u\) of \(C_{\mathrm{tree}}\). The result is a vertex-labeled
witness tree \(\mathtt{Witness}(C)\) with exactly \(r\) edges.\footnote{To make
the map well defined, fix once and for all a deterministic
rule for selecting \(C_{\mathrm{tree}}\) from the underlying simple graph of
\(C\), and another for deciding which endpoint of each remaining edge labels
the new leaf. The particular rules do not affect the argument.}
Note that its vertex labels need not be unique. Since the multiset of labeled edges in
\(\mathtt{Witness}(C)\) is exactly \(C\), the map is injective. The construction
is illustrated in \Cref{fig:witness-construction}.

\begin{figure}[ht]
\centering
\begin{tikzpicture}[
    vertex/.style={circle,draw,minimum size=6mm,inner sep=0pt},
    every node/.style={font=\small},
    line width=0.6pt
]
    \node[vertex] (a1) at (-0.8,-0.35) {$a$};
    \node[vertex] (b1) at (0.8,-0.35) {$b$};
    \node[vertex] (c1) at (0,0.9) {$c$};
    \draw (a1)--(c1)--(b1)--(a1);
    \draw (a1) to[bend right=35] (b1);
    \node at (0,-1.25) {$C$};

    \draw[-{Latex[length=2mm]}] (1.45,0.15)--(2.45,0.15);

    \node[vertex] (a2) at (3.1,0.45) {$a$};
    \node[vertex] (b2) at (4.1,0.45) {$b$};
    \node[vertex] (c2) at (5.1,0.45) {$c$};
    \node[vertex] (b3) at (3.1,-0.55) {$b$};
    \node[vertex] (a3) at (5.1,-0.55) {$a$};
    \draw (a2)--(b2)--(c2);
    \draw[dashed] (a2)--(b3);
    \draw[dashed] (c2)--(a3);
    \node at (4.1,-1.25) {$\mathtt{Witness}(C)$};
\end{tikzpicture}
\caption{
The selected spanning tree \(C_{\mathrm{tree}}\) is
\(a\text{--}b\text{--}c\). The two remaining edges \(\{a,b\}\) and
\(\{a,c\}\) are attached as dashed leaf edges to form
\(\mathtt{Witness}(C)\). The labels \(a\) and \(b\) therefore repeat.}
\label{fig:witness-construction}
\end{figure}

We now upper bound the number of witnesses. Each witness is a tree with \(r\) edges,
and each labeled edge belongs to \(E(G)\). There are at most \(4^r\) rooted
ordered tree shapes~\cite{Knuth1997TAOCP1}. We fix such a shape and enumerate the number of labelings that lead to a witness. The label of the root is chosen first. There are 
\(n\) possible ways to do this. Next, the remaining vertices are labeled in breadth-first order. Once the
label of a parent has been fixed, the label of each child can be chosen in at
most \(\Delta\) ways, since it must be a neighbor of its parent in \(G\). Thus,
\begin{align}
    \text{Number of witness trees}
    \le 4^r n\Delta^r.
    \label{eq:number-witnesses}
\end{align}

We next bound the probability that a fixed \(r\)-edge multiset \(S\) occurs in
\(\Gsamp\). Fix an ordering of the \(r\) edge occurrences in \(S\). There are at
most \(T^r\) choices of distinct sampling times at which these edges can be
sampled.
For any fixed choice, the probability that the corresponding edges are sampled
is \(M^{-r}\). Therefore,
\begin{align}
    \Pr[S\text{ occurs in }\Gsamp]
    \le T^rM^{-r}
    =\left(\frac{2}{c\Delta}\right)^r
    =\left(\frac{1}{8\Delta}\right)^r.
    \label{eq:fixed-edge-set}
\end{align}
Combining \eqref{eq:number-witnesses} and \eqref{eq:fixed-edge-set}, for each
\(r\), the probability that \(\Gsamp\) contains a connected \(r\)-edge multiset
is at most
\(4^rn\Delta^r(1/(8\Delta))^r=n/2^r\). Set
\(L:=\lceil10\log_2 n\rceil\). A union bound over
\(r=L,\ldots,T\) shows that the probability that some component has at least
\(L\) edges is at most
\(\sum_{r=L}^T n/2^r\le 2n/2^L\le 2n^{-9}\).
Thus, with probability at least \(1-2n^{-9}\), every component has
\(O(\log n)\) edges.

\paragraph{Components are ``tree-like'' for large \(\Delta\).}
Assume \(\Delta=\Omega((\log n)^3)\), and set
\(p:=\lceil 20\log n/\log\Delta\rceil\). By the first part, except with
probability at most \(2n^{-9}\), every component has at most \(L\) vertices.
Fix \(m\le L\), and let \(\mathcal B_m\) be the event that \(\Gsamp\) contains
a component with \(m\) vertices and at least \(m-1+p\) edge occurrences. We
will now bound the probability of \(\mathcal B_m\).

Consider a connected edge multiset \(C\) with \(m-1+p\) edges whose underlying
graph has \(m\) vertices. We again use the injective map
\(C\mapsto\mathtt{Witness}(C)\). It therefore suffices to count the possible
witnesses.

Each witness consists
of the tree \(C_{\mathrm{tree}}\), which has \(m-1\) edges and distinct vertex
labels, together with \(p\) new leaves attached to its vertices. Each new leaf
has one of the \(m\) labels already appearing in \(C_{\mathrm{tree}}\). Using
the same argument as above, there are at most \(4^m n\Delta^{m-1}\) choices for
\(C_{\mathrm{tree}}\). Once this tree is fixed, there are at most \(m^p\)
choices for the attachment points of the leaves and at most \(m^p\) choices for
their labels. Thus,
\begin{equation}
    \text{Number of witness trees}
    \le 4^m n\Delta^{m-1}m^{2p}.
    \label{eq:tree-like-witness-count}
\end{equation}
If \(\mathcal B_m\) occurs, then at least one such edge multiset \(C\) occurs in
\(\Gsamp\). Combining \eqref{eq:fixed-edge-set} and
\eqref{eq:tree-like-witness-count}, a union bound gives
\begin{align}
    \Pr[\mathcal B_m]
    &\le 4^m n\Delta^{m-1}m^{2p}
    \left(\frac{1}{8\Delta}\right)^{m-1+p}
    =\frac{8n}{2^m}\left(\frac{m^2}{8\Delta}\right)^p.
    \label{eq:event-bm}
\end{align}
Since \(m\le L=O(\log n)\) and \(\Delta=\Omega((\log n)^3)\), we have
\(m^2/(8\Delta)\le\Delta^{-1/4}\) for all sufficiently large \(n\). Since
\(p\ge 20\log n/\log\Delta\), we have
\(\Pr[\mathcal B_m]\le 8n\Delta^{-p/4}\le 8n^{-4}\). A union bound over
\(m\le L\) gives a total failure probability of \(O(\log n/n^4)\).

The first part fails with probability at most \(2n^{-9}\), and the second with
probability \(O(\log n/n^4)\). Thus, for all sufficiently large \(n\), with
probability at least \(1-1/n\), every component has \(O(\log n)\) vertices and
edges. Moreover, when \(\Delta=\Omega((\log n)^3)\), every component has at most
\(O(\log n/\log\Delta)\) edges beyond a spanning tree. This proves the lemma.
\end{proof}

\textbf{AI Disclosure:} An initial proof \Cref{thm:greedy}, which applied only to unit weight requests, was obtained by the authors. Upon prompting ChatGPT 5.5 with this proof, it produced a simplified proof of the same bound that also generalized to requests with weights in $[0,1]$. This is the version of the proof presented in Section 3 of the paper. The other proofs in the paper are by authors. AI tools were also used to assist with typesetting and proofreading. The authors take responsibility for the contents of the paper.

\bibliography{general}
\bibliographystyle{alpha}
\end{document}